\documentclass[12pt]{article}


\usepackage[width=16cm, height=22cm]{geometry}   
\usepackage[font={small}, width=14cm]{caption}
\usepackage{amsmath}
\usepackage{amssymb}
\usepackage{amsthm}
\usepackage{amscd}
\usepackage{bbm}
\usepackage{enumitem}
\usepackage{float}
\usepackage{authblk}
\usepackage{xcolor}
\usepackage{mathrsfs}
\usepackage[mathscr]{eucal}
\usepackage[normalem]{ulem}
\usepackage{graphicx}
\usepackage{tikz-cd}
\usepackage{comment}
\usepackage{pgfplots}
\usepackage{subcaption}
\usepackage{empheq}
\usepackage{fancyvrb}
\usepackage{url}

\usetikzlibrary{cd}


\newtheorem{thm}{Theorem}[section]
\newtheorem{prop}[thm]{Proposition}
\newtheorem{lem}[thm]{Lemma}

\theoremstyle{definition}
\newtheorem{dfn}[thm]{Definition}

\theoremstyle{remark}
\newtheorem{rem}[thm]{Remark}

\newtheorem{example}[thm]{Example}


\newcommand{\CC}{\mathbb{C}}
\newcommand{\NN}{\mathbb{N}}
\newcommand{\RR}{\mathbb{R}}

\newcommand{\ZZ}{\mathbb{Z}}


\title{Topological edge states of 1D chains and index theory}

\author{Guo Chuan Thiang}
\affil{\normalsize Beijing International Center for Mathematical Research, Peking University}

\date{\today}

\begin{document}
\maketitle

\begin{abstract}
We provide an elementary proof and refinement of a well-known idea from physics: a chiral-symmetric local Hamiltonian on a half-space has the same signed number of edge-localized states with energies in the bulk band gap, as its bulk winding number. The requirement of non-elementary methods to relate generic and non-generic cases is emphasized. Our hands-on approach complements a quick abstract proof based on the classical index theory of Toeplitz operators.
\end{abstract}

\section{Introduction}
The simplest example of \emph{bulk-edge correspondence} (BEC) occurs in discrete 1D model Hamiltonians, such as the Su--Schrieffer--Heeger model \cite{SSH}. A rough statement is: the chiral-symmetry condition disconnects the manifold of spectrally gapped model Hamiltonians, with winding numbers labelling the connected components; furthermore, this winding number counts the number of edge-localized zero-energy eigenstates for the Hamiltonian truncated to a half-line. The reader may recognize this claim as an index theorem,
\[
\text{Analytic index} = \text{Topological index},
\]
guaranteeing a certain stability of the above-mentioned edge states. It should come as no surprise that the above BEC is really a rephrasing of a classic index theorem, that of Toeplitz operators (see Section \ref{sec:Toeplitz}).

Still, some refinements are desirable, because the actual parameter space of models under consideration is often strictly smaller than the ``universal'' one used in the abstract index theory. Concretely, physicists often restrict to finite hopping range, or even nearest-neighbour models, and would like to understand the BEC without the need to ``borrow'' arbitrarily long-range hopping terms and/or extra internal degrees of freedom.

In the literature, there are some ``elementary'' approaches to the 1D BEC, particularly for the minimal two-band case, which are based on linear algebra and basic complex function theory, e.g.\ \cite{Chen,Mong}. In view of this, the reader might wonder what the benefits of an index-theoretic perspective are. Here are some answers:
\begin{itemize}
\item In ``elementary'' approaches, one assumes generic cases within the full parameter manifold, and apply certain ans\"{a}tze, for a simplifed analysis. There are subtleties involved in handling the submanifold of non-generic cases, particularly when the full manifold is already constrained by symmetries. See Appendix \ref{sec:non.exponential} for an example of how ans\"{a}tze can fail. Conversely, non-generic subclasses of models, such as the Su--Schrieffer--Heeger models (see Example \ref{ex:dimerized}), are used to exhibit BEC phenomena. Then extrapolation to the generic case needs justification.  ``Non-elementary'' methods such as operator/spectral perturbation theory are generally needed to resolve these issues. 
\item For finite-range Hamiltonians, the eigenvalue problem can be rephrased as a linear recurrence equation. Purely ``elementary'' methods would suggest that the BEC is a feature of the latter, independent of extra analytic structures in quantum theory. In fact, the role of self-adjointness is important, even if the spectral theory is left implicit in the use of informal terminology (e.g., ``bulk spectrum'', ``edge spectrum'', ``edge states'', ``bulk gap''). BEC seems to be restricted to settings where a good Fredholm operator theory is available.
\item BECs for differential operators (e.g.\ \cite{LT}) involves understanding the (in)sensitivity to choice of self-adjoint boundary conditions. The idea of propagating initial data subject to boundary conditions could be more closely mimicked in the case of BEC for discrete models. This is the approach that we take, and it highlights the dependence of the 1D BEC on boundary specifications.
\item The classic Toeplitz index theory and Bott periodicity proof are themselves illuminated by their connection to our ``elementary'' approach to the 1D BEC; see the proof of Theorem \ref{thm:higher.BEC} and the discussion in Section \ref{sec:Toeplitz}.
\end{itemize}

In this paper, we shall provide a general ``elementary'' approach to the 1D BEC. We devote special attention to isolating the mathematical structure essential to the phenomenon, as well as identifying points where ``non-elementary'' arguments enter. We cover the cases of arbitrary hopping range, arbitrary unit cell dimension, and possibly singular hopping terms. The main correspondence theorems appear in Section \ref{sec:chiral}. Some special versions of these results are well-known, but the precise statements, ``elementary'' proofs, and handling of non-generic cases in this paper are new.

Section \ref{sec:recurrence} introduces generalities about vector-valued recurrence equations without self-adjointness requirements. This is with a view towards ``non-Hermitian topological phases'' and other physical situations outside of a standard quantum mechanical setup. Then we study the role of self-adjointness in Section \ref{sec:SA.case}, followed by the BEC of chiral-symmetric Hamiltonians in Section \ref{sec:chiral}.

\section{Hamiltonians and vector-valued recurrence equations}\label{sec:recurrence}
\subsection{General setup and terminology}
Let $\mathcal{V}$ be a finite-dimensional complex vector space of dimension $d_\mathcal{V}$, and
\[
\mathcal{S}:=\{\psi\equiv (\psi_n)_{n\in\ZZ}\,:\,\psi_n\in \mathcal{V}\}
\]
be the linear space of $\mathcal{V}$-valued sequences. Consider the \emph{bulk parameters}
\[
V,A_r,B_r\in{\rm End}(\mathcal{V}),\quad r=1,\ldots,R.
\]
Here, $V$ is the \emph{on-site potential}, $A_r,B_r$ are respectively \emph{left-hopping} and $\emph{right-hopping}$ endomorphisms with \emph{hopping range} $r$, and $R\in \NN$ denotes the maximal hopping range. Unless otherwise stated, we assume that $A_R,B_R\in{\rm GL}(\mathcal{V})$ are invertible, but the singular case will also be handled.

We will abbreviate
\[
A:=\{A_r\}_{r=1,\ldots,R},\qquad B:=\{B_r\}_{r=1,\ldots,R}.
\]
The endomorphisms $V,A,B$ determine a linear operator $H\equiv H(V,A,B)$ on $\mathcal{S}$,
\begin{equation*}
(H\psi)_n=V\psi_n+\sum_{r=1}^R\left(B_r\psi_{n-r}+A_r\psi_{n+r}\right),\qquad n\in\ZZ,
\end{equation*}
called the \emph{bulk Hamiltonian}. For later use, we also define the \emph{Bloch Hamiltonian} at \emph{exponentiated momentum} $\lambda\in\CC^*$ to be
\begin{equation*}
\mathbf{H}(\lambda)\equiv \mathbf{H}(\lambda;V,A,B):=V+\sum_{r=1}^R\left(\lambda^{-r}B_r+\lambda^r A_r\right)\in{\rm End}(\mathcal{V}).
\end{equation*}

At each \emph{energy} $E\in\CC$, we are interested in the space of \emph{energy}-$E$ modes of $H$,
\begin{equation*}
\mathcal{M}^E\equiv \mathcal{M}^E(V,A,B):=\{\psi\in\mathcal{S}\;\,|\;\,H\psi=E\psi\}=\ker(H-E).
\end{equation*}
Let us write $V_E:=V-E$. Observe that
\begin{equation}
\psi\in\mathcal{M}^E\;\Leftrightarrow\; V_E\psi_n+\sum_{r=1}^R\left(B_r\psi_{n-r}+A_r\psi_{n+r}\right)=0,\qquad n\in\ZZ.
\label{eqn:bulk.recurrence}
\end{equation}
So $\mathcal{M}^E$ is equivalently the solution space of the order-$2R$ $\mathcal{V}$-valued linear recurrence equation, Eq. \eqref{eqn:bulk.recurrence}.

\subsection{Energy-$E$ modes from initial data}
To solve Eq.\ \eqref{eqn:bulk.recurrence}, one typically invokes the ansatz $\psi_n=\lambda^nu_\lambda, \lambda\in\CC^*$ (e.g.\ \cite{Chen,Mong}), see Eq.\ \eqref{eqn:basic.ansatz} below. Strictly speaking, this is not sufficient (see Appendix \ref{sec:exponential.validity}), and a general conceptual viewpoint is desirable.

\medskip
The invertibility of $A_R$ implies that the data in any $2R$ consecutive unit cells,
\[
\psi_n,\psi_{n+1},\ldots,\psi_{n+2R-1}\]
uniquely and linearly determines the subsequent unit cell data, $\psi_{n+2R}$. Explicitly,
\begin{equation}
\scriptsize
\underbrace{\begin{pmatrix}
0 & \mathbf{1}_\mathcal{V} & 0 & \cdots & \cdots & \cdots & \cdots & \cdots & \cdots\\
\vdots & 0 & \mathbf{1}_\mathcal{V} & \ddots & \cdots & \cdots & \cdots & \cdots & \cdots\\
\vdots & \vdots & 0 & \ddots & \ddots & \vdots & \vdots & \vdots & \vdots \\
\vdots & \vdots & \vdots & \ddots & \mathbf{1}_\mathcal{V} & 0  & \vdots & \vdots & \vdots \\
\vdots & \vdots & \vdots & \vdots & 0 & \mathbf{1}_\mathcal{V} & 0 & \vdots & \vdots \\
\vdots & \vdots & \vdots & \vdots & \vdots & 0 & \mathbf{1}_\mathcal{V} & \ddots & \vdots\\
\vdots & \vdots & \vdots & \vdots & \vdots & \vdots & \ddots & \ddots & 0\\
0 & \cdots & \cdots & \cdots & \cdots & \cdots & \cdots & 0 & \mathbf{1}_\mathcal{V}\\
-A_R^{-1}B_R & -A_R^{-1}B_{R-1} & -A_R^{-1}B_{R-2} & \cdots & -A_R^{-1} B_1 & -A_R^{-1}V_E & -A_R^{-1} A_1 & \cdots & -A_R^{-1}A_{R-1}
\end{pmatrix}}_{=: C_E\equiv C_E(V,A,B)}
\begin{pmatrix}
\psi_n\\
\psi_{n+1}\\
\vdots\\
\vdots\\
\vdots\\
\vdots\\
\vdots\\
\vdots\\
\psi_{n+2R-1}
\end{pmatrix}
=\begin{pmatrix}
\psi_{n+1}\\
\psi_{n+2}\\
\vdots\\
\vdots\\
\vdots\\
\vdots\\
\vdots\\
\vdots\\
\psi_{n+2R}
\end{pmatrix}\label{eqn:vector.companion.matrix}
\normalsize
\end{equation}
The above $2R\times 2R$ block matrix $C_E\in {\rm End}(\mathcal{V}^{\oplus 2R})$ is called the \emph{companion matrix} associated to the recurrence equation \eqref{eqn:bulk.recurrence}, and it advances data one step to the right. Up to a sign, $\det C_E=\det (A_R^{-1}B_R)$. Invertibility of $B_R$ implies that $C_E^{-1}$ exists with a similar form as $C_E$, and $C_E^{-1}$ advances data one step to the left. Here, the reader may anticipate difficulties when $A_R$ and/or $B_R$ is singular (see Example \ref{ex:dimerized}).

For convenience, we write $\mathcal{V}_n$ for the copy of $\mathcal{V}$ at the $n$-th unit cell.
The above discussion says that $\mathcal{M}^E$ is a $(2R\cdot d_\mathcal{V})$-dimensional linear space identifiable with the \emph{initial data space},
\begin{equation}
\mathcal{D}:=\mathcal{V}_1\oplus\ldots\oplus\mathcal{V}_{2R}.\label{eqn:initial.data.space}
\end{equation}
Eq.\ \eqref{eqn:bulk.recurrence} is formally solved by applying $C_E, C_E^{-1}$ iteratively to initial data in $\mathcal{D}$, thereby obtaining energy-$E$ modes in $\mathcal{M}^E$. 

The invertible companion matrix $C_E\in{\rm GL}(\mathcal{V}^{\oplus 2R})={\rm GL}(\mathcal{D})$ has a (unordered) list of \emph{nonzero generalized eigenvalues} $\lambda_i\equiv \lambda_i(E)\in\CC^*$, with respective algebraic multiplicities $m_i\geq 1$. The corresponding \emph{generalized eigenspaces}
\[
\mathcal{D}_i^E:=\{v\in \mathcal{D}\,:\,(C_E-\lambda_i)^{m_i}v=0\}
\]
are invariant subspaces for $C_E$ (see \cite{Kato}, Chapter 1, \S5.4). So the initial data space admits a canonical decomposition
\begin{equation}
\mathcal{D}=\bigoplus_i \mathcal{D}_i^E.\label{eqn:dynamical.splitting}
\end{equation}
The energy-$E$ mode space $\mathcal{M}^E\cong\mathcal{D}$ inherits a corresponding decomposition into \emph{normal mode spaces} labelled by the $\lambda_i$. \emph{Note that the ``dynamical splitting'', Eq.\ \eqref{eqn:dynamical.splitting}, may not be related to the unit cell splitting, Eq.\ \eqref{eqn:initial.data.space}.}

Depending on the size of the generalized eigenvalue $\lambda_i$, there is a further hierarchy,
\begin{equation*}
\mathcal{D}=\bigoplus_{i:|\lambda_i|<1} \mathcal{D}_i^E\bigoplus_{i:|\lambda_i|=1} \mathcal{D}_i^E\bigoplus_{i:|\lambda_i|>1}\mathcal{D}_i^E=:\mathcal{D}_{\downarrow}^E\oplus \mathcal{D}_{\rm Bloch}^E\oplus \mathcal{D}_{\uparrow}^E,
\end{equation*}
with each type of initial data generating the energy-$E$ \emph{decrease modes}, \emph{Bloch modes}, and \emph{increase modes}, respectively.

\subsubsection{Companion matrices versus Bloch Hamiltonians}
The characteristic polynomial of $C_E(V,A,B)$ can be computed as follows,
\begin{align}
P_E(\lambda)&:=\det(\lambda-C_E)\nonumber\\
&=
(-1)^{d_\mathcal{V}^2\cdot(2R-1)}\det
\footnotesize
\left(\begin{array}{ccccc|c}
-\mathbf{1} & 0 &\cdots &\cdots & \cdots & \lambda\\
\lambda & -\mathbf{1} & 0 & \cdots  & \vdots & 0 \\
0 & \ddots &\ddots &\ddots & \vdots& \vdots\\
\vdots &\ddots & \lambda & -\mathbf{1} & 0 & \vdots\\
\vdots & \cdots & 0 & \lambda & - \mathbf{1} & 0 \\
\hline
A_R^{-1}B_{R-1} & \cdots & \cdots & A_R^{-1}A_{R-2} & A_R^{-1}A_{R-1} + \lambda & A_R^{-1}B_R
\end{array}\right)
\normalsize\nonumber\\
 &= \scriptsize
 \frac{1}{\det(A_R)}\cdot\det\left(
  B_R- 
 \begin{pmatrix} B_{R-1} & \cdots & A_{R-2} & A_{R-1}+\lambda A_R\end{pmatrix}
 \left(
\begin{array}{ccccc}
-\mathbf{1} & 0 &\cdots &\cdots & \cdots \\
-\lambda & -\mathbf{1} & 0 & \cdots  & \vdots  \\
-\lambda^2 & \ddots &\ddots &\ddots & \vdots\\
\vdots &\ddots & -\lambda & -\mathbf{1} & 0 \\
-\lambda^{2R-2} & \cdots & -\lambda^2 & -\lambda & - \mathbf{1} 
\end{array}
 \right)
 \begin{pmatrix}\lambda \\ 0 \\ \vdots \\ \vdots \\ 0\end{pmatrix} 
\right)
 \normalsize\nonumber\\
&=\frac{\lambda^{R\cdot d_\mathcal{V}}}{\det(A_R)}\cdot\det\left(
\lambda^{-R}B_R+\lambda^{-(R-1)}B_{R-1}+\ldots +V_E+\ldots +\lambda^{R-1}A_{R-1}+\lambda^R A_R
\right)\nonumber\\
&=\frac{\lambda^{R\cdot d_\mathcal{V}}}{\det(A_R)}\cdot\det\left(\mathbf{H}(\lambda;V,A,B)-E\right)\label{eqn:char.poly.calculation},
\end{align}
where in the third equality, we used a standard identity for the determinant of a $2\times 2$ block matrix. Thus,
\begin{equation*}
\lambda\;\textrm{is eigenvalue of}\;C_E\quad\Leftrightarrow\quad E\;\textrm{is eigenvalue of}\;\mathbf{H}(\lambda).
\end{equation*}
The companion matrices and the Bloch Hamiltonians are complementary ways of investigating the operator $H(V,A)$. The former is useful when varying the energy $E\in \CC$, while the latter is useful when varying the (exponentiated) momentum $\lambda\in\CC^*$.

\paragraph{Exponential ansatz.} 
Suppose the initial data happens to an eigenvector of $C_E$,
\[
v_i\in\ker(C_E-\lambda_i)\subset\mathcal{D}_i^E.
\]
By considering Eq.\ \eqref{eqn:vector.companion.matrix}, it is readily seen that $v_i$ must have a nice form with respect to the unit cell splitting,
\begin{equation}
v_i=(\lambda_i u_i, \lambda_i^2 u_i,\ldots, \lambda_i^{2R}u_i),\qquad \textrm{for some}\; u_i\in\mathcal{V}.
\label{eqn:companion.to.Bloch.eigenvector}
\end{equation}
Furthermore, the last entry of Eq.\ \eqref{eqn:vector.companion.matrix} for $n=0$ becomes
\[
-B_Ru_i-B_{R-1}(\lambda_i u_i)-\ldots -V_E(\lambda_i^R u_i)-\ldots -A_{R-1}(\lambda_i^{2R-1}u_i)=A_R(\lambda_i^{2R}u_i).
\]
Equivalently, $u_i$ is an $E$-eigenvector for the Bloch Hamiltonian $\mathbf{H}(\lambda_i)$,
\begin{equation*}
0=\left(V_E+\sum_{r=1}^R\left(\lambda_i^{-r}B_r+\lambda_i^r A_r\right)\right)u_i\;\Leftrightarrow\;u_i\in\ker(\mathbf{H}(\lambda_i)-E).
\end{equation*}
Thus the $\lambda_i$-eigenvector $v_i$ of $C_E$ generates the following energy-$E$ normal mode,
\begin{equation}
\psi_n=\lambda_i^nu_i,\qquad u_i\in \ker(\mathbf{H}(\lambda_i)-E).\label{eqn:basic.ansatz}
\end{equation}
Eq.\ \eqref{eqn:basic.ansatz} is the \emph{exponential ansatz}. 

A normal mode of the form in Eq.\ \eqref{eqn:basic.ansatz} is already generated by $u_i\in\mathcal{V}$ --- the initial data can be reduced to a single unit cell $\mathcal{V}$, albeit supplemented by data of the geometric factor $\lambda_i$.

However, when $\lambda_i$ is a \emph{non-simple} eigenvalue of $C_E$, the generalized $\lambda_i$-eigenspace of $C_E$ is not easily related to the $E$-eigenspace of $\mathbf{H}(\lambda_i)$. The exponential ansatz is not applicable when the initial data $v_i$ is only a generalized eigenvector, see Appendix \ref{sec:exponential.validity} for a discussion. 

Finally, we mention that when $A_R$ and/or $B_R$ is singular, the above algebraic arguments break down, and it is even possible for normal modes (and edge states) to be compactly-supported, see Example \ref{ex:dimerized}.

\subsection{Half-space Hamiltonians and initial data for edge modes}
Define the right-half truncated sequence space,
\[
\check{S}:=\{\check{\psi}\equiv (\check{\psi}_n)_{n\geq 1}\,:\,\check{\psi}_n\in\mathcal{V}\}.
\]
The \emph{half-space Hamiltonian} $\check{H}$ is typically defined to be the truncation of the bulk Hamiltonian $H$ to $\check{\mathcal{S}}$. Explicitly,
\begin{empheq}[left={(\check{H}\check{\psi})_n=\empheqlbrace}]{align}
\sum_{r=1}^{R}B_r\check{\psi}_{n-r}+V\check{\psi}_n+\sum_{r=1}^{R}A_r\check{\psi}_{n+r},& & n\geq R+1,\nonumber\\
\sum_{r=1}^{n-1}B_r\check{\psi}_{n-r}+V\check{\psi}_{n}+\sum_{r=1}^{R}A_r\check{\psi}_{n+r},& & n=1,\ldots,R.\label{eqn:half.space.bc}
\end{empheq}
Eq.\ \eqref{eqn:half.space.bc} says that for the first $R$ unit cells, any right-hopping term arriving from $n^\prime\leq 0$ is set to zero. The \emph{half-space energy-$E$ mode spaces} are defined as
\begin{equation*}
\check{\mathcal{M}}^E:=\{\check{\psi}\in\check{\mathcal{S}}\,:\,\check{H}\check{\psi}=E\check{\psi}\}=\ker(\check{H}-E),\qquad E\in\CC.
\end{equation*}
As was the case with the bulk Hamiltonian, initial data in $\mathcal{D}=\mathcal{V}_1\oplus\ldots\oplus\mathcal{V}_{2R}$ gets propagated to the \emph{right} by the companion matrix $C_E$ to generate energy-$E$ modes. However, Eq.\ \eqref{eqn:half.space.bc} imposes extra conditions on the allowed initial data.

It is more efficient to think of elements of $\check{S}$ as sequences $(\check{\psi}_n)_{n\geq {1-R}}$ subject to the \emph{Dirichlet boundary condition} on the initial $R$ unit cells,
\begin{equation}
\check{\psi}_{1-R}=\ldots=\check{\psi}_{-1}=\check{\psi}_0=0.\label{eqn:Dirichlet.condition}
\end{equation}
Then an equivalent specification of $\check{H}$ is
\[
(\check{H}\check{\psi})_n=\sum_{r=1}^{R}B_r\check{\psi}_{n-r}+V\check{\psi}_n+\sum_{r=1}^{R}A_r\check{\psi}_{n+r}, \qquad n\geq 1,
\]
with $\check{\psi}\equiv (\check{\psi})_{n\geq 1-R}$ subject to Eq.\ \eqref{eqn:Dirichlet.condition}.
In other words, we now consider the initial data space to be
\[
\mathcal{D}=\mathcal{V}_{1-R}\oplus\ldots\oplus\mathcal{V}_0\oplus \mathcal{V}_1\oplus\ldots\oplus\mathcal{V}_{R},
\]
but restrict to the ``Dirichlet'' subspace
\[
\mathcal{D}_{\rm Dir}:=0\oplus\ldots\oplus 0\oplus\mathcal{V}_1\oplus\ldots\oplus\mathcal{V}_R\;\subset\;\mathcal{D}.
\]
Then we have the identification
\[
\check{\mathcal{M}}^E\cong\mathcal{D}_{\rm Dir}
\]
provided by the iterated application of $C_E$ on the Dirichlet initial data in $\mathcal{D}_{\rm Dir}$. Thus
\begin{equation}
\dim\check{\mathcal{M}}^E=\dim \mathcal{D}_{\rm Dir}=R\cdot d_\mathcal{V}.\label{eqn:Dirichlet.dimension}
\end{equation}

\medskip
Within $\check{M}^E\cong \mathcal{D}_{\rm Dir}$, we are interested in characterizing the subspace of so-called energy-$E$ \emph{edge} modes.
\begin{dfn}\label{dfn:edge.states}
For a Hamiltonian $H=H(V,A,B)$, an energy-$E$ edge mode $\check{\psi}$ is an eigen-solution, $\check{H}{\check{\psi}}=E\check{\psi}$ which goes to zero as $n\to \infty$.
\end{dfn}
By definition, an edge mode $\check{\psi}$ belongs to $\check{\mathcal{M}}^E\cong\mathcal{D}_{\rm Dir}$, and the decay condition just means that the initial data of $\check{\psi}$ must also come from $\mathcal{D}^E_{\downarrow}$ (i.e., initial data for decrease modes). Thus the space of energy-$E$ edge modes is precisely the linear space
\begin{equation*}
\check{\mathcal{M}}^E_{\rm edge}\cong \mathcal{D}_{\rm Dir}\cap \mathcal{D}^E_{\downarrow}.
\end{equation*}

\begin{rem}
Other ``boundary conditions'' may be imposed on the initial data space $\mathcal{D}$, by intersecting with subspaces different from $\mathcal{D}_{\rm Dir}$. In the self-adjoint case to be studied in Section \ref{sec:SA.case} later, we would also need to check the self-adjointness of such boundary conditions, analogous to the case of Schr\"{o}dinger/Dirac differential operators. So the 1D bulk-boundary correspondences that we investigate in this paper are implicitly dependent on the choice of Dirichlet boundary condition as determined by the cut-off position for the unit cells (see \cite{Thiang}, \cite{PSB} Fig.\ 2.1, and Example \ref{ex:dimerized}). This should be contrasted with the 2D BEC, which is actually robust against the choice of boundary conditions \cite{LT}.
\end{rem}

\section{Self-adjoint Hamiltonians}\label{sec:SA.case}
Now, let $\mathcal{V}$ have an inner product, then the sequence space $\mathcal{S}$ contains the Hilbert space $\ell^2(\ZZ;\mathcal{V})$ of square-summable sequences. In quantum-mechanical problems\footnote{Also classical mechanical problems with an energy conservation law, but then we would use real Hilbert spaces.}, we require $H(V,A,B)$ to be self-adjoint on $\ell^2(\ZZ;\mathcal{V})$. This is equivalent to taking
\[
V=V^*,\qquad B_r=A_r^*,\quad r=1,\ldots,R.
\]
From now on, we consider $H\equiv H(V,A)\equiv H(V,A,A^*)$ as a self-adjoint operator on $\ell^2(\ZZ;\mathcal{V})$,
\begin{equation*}
(H\psi)_n=V\psi_n+\sum_{r=1}^R\left(A_r^*\psi_{n-r}+A_r\psi_{n+r}\right),\qquad n\in\ZZ.
\end{equation*}
Similarly, the half-space Hamiltonian is the self-adjoint operator $\check{H}(V,A)$ on $\ell^2(\NN;\mathcal{V})$ given by
\[
(H\check{\psi})_n=V\check{\psi}_n+\sum_{r=1}^{R}\left(A_r^*\check{\psi}_{n-r}+A_r\check{\psi}_{n+r}\right), \qquad n\geq 1,
\]
with $\check{\psi}\in\ell^2(\NN;\mathcal{V})$ extended to $\check{\psi}=(\check{\psi}_n)_{n\geq 1-R}$ but subjected to the Dirichlet condition, Eq.\ \eqref{eqn:Dirichlet.condition}. (Here, $\NN$ denotes positive natural numbers.)

The spectrum of the self-adjoint $H$ and $\check{H}$ are closed subsets of $\RR$, and stability results of their spectra \cite{Kato} are available.

\subsection{Mode duality from self-adjointness}
Self-adjointness imposes a fundamental symmetry on the generalized eigenvalues of $C_E$.

\begin{lem}\label{lem:basic.duality}
For self-adjoint $H(V,A)$ and real energies $E\in\RR$, the companion matrix $C_E(V,A)$ has an eigenvalue $\lambda_i$ with algebraic multiplicity $m_i$ iff it has an eigenvalue $\overline{\lambda_i^{-1}}$ with algebraic multiplicity $m_i$. In this case, $E$ is an eigenvalue of the Bloch Hamiltonians $\mathbf{H}(\lambda_i)$ and $\mathbf{H}(\overline{\lambda_i^{-1}})$.
\end{lem}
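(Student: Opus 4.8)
## Proof Proposal

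The plan is to exploit the relation between the characteristic polynomial of $C_E$ and the Bloch Hamiltonian established in Eq.\ \eqref{eqn:char.poly.calculation}, combined with the Hermiticity conditions $V=V^*$, $B_r=A_r^*$. The key observation is that for real $E$, the matrix $\mathbf{H}(\lambda)-E$ satisfies a conjugation-symmetry under $\lambda\mapsto\overline{\lambda^{-1}}$. Indeed, $(\mathbf{H}(\lambda)-E)^* = V_E + \sum_r(\overline{\lambda}^{-r}A_r + \overline{\lambda}^r A_r^*) = V_E + \sum_r(\overline{\lambda}^{-r}A_r + \overline{\lambda}^r B_r)$, and substituting $\mu=\overline{\lambda^{-1}}$ so that $\overline{\lambda}=\mu^{-1}$, this equals $V_E + \sum_r(\mu^r A_r + \mu^{-r}B_r) = \mathbf{H}(\mu)-E = \mathbf{H}(\overline{\lambda^{-1}})-E$. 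Hence $\det(\mathbf{H}(\lambda)-E) = \overline{\det(\mathbf{H}(\overline{\lambda^{-1}})-E)}$ for real $E$.

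Next I would convert this into a statement about the roots and their multiplicities of $P_E(\lambda)=\det(\lambda - C_E)$. From Eq.\ \eqref{eqn:char.poly.calculation}, $P_E(\lambda) = \frac{\lambda^{R d_\mathcal{V}}}{\det A_R}\det(\mathbf{H}(\lambda)-E)$, so the nonzero roots of $P_E$ (with multiplicity) coincide with the roots of $\lambda\mapsto\det(\mathbf{H}(\lambda)-E)$, which is a Laurent polynomial in $\lambda$ whose lowest-order term is $\lambda^{-R d_\mathcal{V}}\det(B_R)$ and highest-order term is $\lambda^{R d_\mathcal{V}}\det(A_R)$, both nonzero. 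Writing $q_E(\lambda):=\lambda^{R d_\mathcal{V}}\det(\mathbf{H}(\lambda)-E) = \det A_R\cdot P_E(\lambda)$, an ordinary polynomial of degree $2R d_\mathcal{V}$ with nonzero constant term, the conjugation-symmetry above translates (after tracking the factor $\lambda^{R d_\mathcal{V}}$) into a functional equation of the shape $q_E(\lambda) = c\,\lambda^{2R d_\mathcal{V}}\,\overline{q_E(1/\overline{\lambda})}$ for a suitable nonzero constant $c$; this is exactly a ``self-inversive up to conjugation'' relation, which forces the multiset of roots of $q_E$ to be invariant under $\lambda\mapsto\overline{\lambda^{-1}}$, preserving multiplicities. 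Since the nonzero generalized eigenvalues of $C_E$ are precisely these roots (with the same algebraic multiplicities, as $P_E$ is the characteristic polynomial), this gives the first claim: $\lambda_i$ occurs with multiplicity $m_i$ iff $\overline{\lambda_i^{-1}}$ does.

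For the last sentence, once $\lambda_i$ is a (generalized) eigenvalue of $C_E$, the equivalence already recorded in the excerpt, ``$\lambda$ is an eigenvalue of $C_E$ iff $E$ is an eigenvalue of $\mathbf{H}(\lambda)$'', immediately gives that $E\in\mathrm{spec}(\mathbf{H}(\lambda_i))$, and applying it again to the partner eigenvalue $\overline{\lambda_i^{-1}}$ gives $E\in\mathrm{spec}(\mathbf{H}(\overline{\lambda_i^{-1}}))$; alternatively this is visible directly from $\det(\mathbf{H}(\overline{\lambda_i^{-1}})-E) = \overline{\det(\mathbf{H}(\lambda_i)-E)} = 0$.

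I expect the main obstacle to be bookkeeping the powers of $\lambda$ and the constant factors carefully enough to state the functional equation for $q_E$ precisely, and then citing or proving cleanly that a polynomial satisfying such a conjugate-self-inversive relation has a root multiset invariant under $\lambda\mapsto\overline{\lambda^{-1}}$ \emph{with multiplicities} — the multiplicity-preservation is what needs the functional equation rather than just a set-theoretic symmetry of the zero locus. One clean way is to note $q_E$ and the function $\lambda\mapsto c\lambda^{2Rd_\mathcal{V}}\overline{q_E(1/\bar\lambda)}$ are the same polynomial, hence have identical factorizations into linear factors, and a linear factor $(\lambda-\lambda_i)$ of the former corresponds to a linear factor $(\lambda - \overline{\lambda_i^{-1}})$ (up to a unit) of the latter; $\lambda_i\neq 0$ throughout because $q_E$ has nonzero constant term. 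A secondary subtlety worth a remark: $|\lambda_i|=1$ is self-partnered under $\lambda\mapsto\overline{\lambda^{-1}}$, consistent with the statement, and the invertibility hypotheses $A_R,B_R\in\mathrm{GL}(\mathcal V)$ are exactly what keep $q_E$ of full degree $2Rd_\mathcal V$ with nonzero constant term so that no roots escape to $0$ or $\infty$.
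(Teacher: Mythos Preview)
Your proposal is correct and follows essentially the same route as the paper: both hinge on the identity $(\mathbf{H}(\lambda)-E)^*=\mathbf{H}(\overline{\lambda^{-1}})-E$ for real $E$, combined with Eq.\ \eqref{eqn:char.poly.calculation} relating $P_E$ to $\det(\mathbf{H}(\lambda)-E)$. The only cosmetic difference is in the multiplicity bookkeeping: the paper handles it by asserting the same equivalence for the derivatives $P_E^{(j)}$, whereas you package it as a conjugate-self-inversive functional equation for $q_E$ and read off the multiset symmetry from the linear factorization---your version is arguably the more transparent of the two.
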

\begin{proof}
Observe that
\[
(\mathbf{H}(\lambda;V,A)-E)^*=\mathbf{H}(\overline{\lambda^{-1}};V,A)-E,\qquad \lambda\in\CC^*, E\in\RR.
\]
Therefore,
\[
P_E(\lambda_i)=0\quad\Leftrightarrow\quad\det(\mathbf{H}(\lambda_i)-E)=0\quad\Leftrightarrow\quad\det(\mathbf{H}(\overline{\lambda_i^{-1}})-E)=0\quad\Leftrightarrow\quad P_E(\overline{\lambda_i^{-1}})=0.
\]
Similarly, for the first $m_i-1$ derivatives,
\[
P_E^{(j)}(\lambda_i)=0,\;\;\;j=1,\ldots, m_i-1\qquad\Leftrightarrow\qquad P_E^{(j)}(\overline{\lambda_i^{-1}})=0,\;\;\;j=1,\ldots,m_i-1.
\]
So $\lambda_i$ is an order-$m_i$ zero of $P_E(\cdot)$ iff $\overline{\lambda_i^{-1}}$ is as well. This is exactly the stated duality for the eigenvalues of $C_E$.
\end{proof}

\subsection{Real energy bands and band gap}
For self-adjoint Hamiltonians, the Bloch Hamiltonians at $\lambda=e^{ik}\in{\rm U}(1)$ are of special interest. Note that $\lambda$ is self-dual in this case. The parameter $k\in[-\pi,\pi]/_{-\pi\sim\pi}$ is called the (real) quasimomentum. For the bulk Hamiltonian $H$, only Bloch modes with $|\lambda|=1$ are bounded and contribute to the spectrum of $H$, whereas decrease modes blow up to the left, and increase modes blow up to the right.

The Bloch Hamiltonians with $\lambda=e^{ik}$ are\footnote{Another convention has $e^{-ik}$ instead of $e^{ik}$, this has a sign effect on winding numbers, Definition \ref{dfn:bulk.winding}.} Hermitian,
\[
\mathbf{H}(e^{ik};V,A)=\mathbf{H}(e^{ik};V,A)^*,
\]
and each of them has $d_\mathcal{V}$ real eigenvalues (counted with multiplicities), labelled in increasing order as
\[
E_j(k),\qquad j=1,\ldots, d_\mathcal{V},\quad e^{ik}\in{\rm U}(1).
\]
The $E_j$ define the energy band functions on the \emph{Brillouin zone} ${\rm U}(1)$. They are continuously defined, although there is the complicated issue of smoothness at band crossings (when $E_j(k)=E_{j+1}(k)$ occurs), which we will not address. What is important is the presence of a \emph{band gap}.
\begin{dfn}
The self-adjoint Hamiltonian $H(V,A)$ is \emph{gapped} if there is a $j$ such that
\[
E_-:=\sup_{e^{ik}\in{\rm U}(1)}E_j(k)\;<\;\inf_{e^{ik}\in{\rm U}(1)}E_{j+1}(k)=:E_+.
\]
In this case, the interval $(E_-,E_+)$ is called the \emph{band gap}.
\end{dfn}

\begin{prop}\label{prop:dim.dec.modes}
Let $H$ be a gapped self-adjoint Hamiltonian with band gap $(E_-,E_+)$. For any $E\in(E_-,E_+)$, we have
\[
\dim\mathcal{D}_{\downarrow}^E=\dim\mathcal{D}_{\uparrow}^E=R\cdot d_\mathcal{V}.
\]
\end{prop}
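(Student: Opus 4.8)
The plan is to exploit the two complementary pictures established earlier: the companion matrix $C_E$ (good for varying $E$) and the Bloch Hamiltonians $\mathbf{H}(\lambda)$ (good for varying $\lambda$), together with the self-adjoint duality of Lemma \ref{lem:basic.duality}. The key point to establish is that for $E$ in the band gap, $C_E$ has \emph{no} eigenvalue on the unit circle, i.e. $\mathcal{D}_{\rm Bloch}^E=0$; once this is known, the duality $\lambda_i\leftrightarrow\overline{\lambda_i^{-1}}$ pairs up eigenvalues inside the unit disk with eigenvalues outside, with matching algebraic multiplicities, forcing $\dim\mathcal{D}_{\downarrow}^E=\dim\mathcal{D}_{\uparrow}^E$; since these two spaces then exhaust $\mathcal{D}=\mathcal{D}_{\downarrow}^E\oplus\mathcal{D}_{\rm Bloch}^E\oplus\mathcal{D}_{\uparrow}^E$ and $\dim\mathcal{D}=2R\cdot d_\mathcal{V}$ by Eq.\ \eqref{eqn:initial.data.space}, each has dimension $R\cdot d_\mathcal{V}$.

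So the argument has three steps. First, recall from the characteristic polynomial computation, Eq.\ \eqref{eqn:char.poly.calculation}, that $\lambda$ is an eigenvalue of $C_E$ iff $E$ is an eigenvalue of $\mathbf{H}(\lambda)$. In particular, $C_E$ has an eigenvalue $\lambda=e^{ik}$ on the unit circle iff $E$ is an eigenvalue of the Hermitian matrix $\mathbf{H}(e^{ik})$, i.e. iff $E=E_j(k)$ for some band index $j$ and some $k$. But $E\in(E_-,E_+)$ lies strictly between $\sup_k E_j(k)$ and $\inf_k E_{j+1}(k)$, so $E$ is not in the range of any band function; hence $C_E$ has no unit-circle eigenvalue and $\mathcal{D}_{\rm Bloch}^E=0$. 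Second, apply Lemma \ref{lem:basic.duality}: the map $\lambda_i\mapsto\overline{\lambda_i^{-1}}$ is an involution on the eigenvalue list of $C_E$ preserving algebraic multiplicities, and since no eigenvalue is fixed by it (a fixed point would satisfy $|\lambda_i|=1$), it restricts to a multiplicity-preserving bijection between $\{\lambda_i:|\lambda_i|<1\}$ and $\{\lambda_i:|\lambda_i|>1\}$. Summing $m_i$ over each set (which gives $\dim\mathcal{D}_\downarrow^E$ and $\dim\mathcal{D}_\uparrow^E$ respectively, by the definition of the generalized eigenspaces and the primary decomposition Eq.\ \eqref{eqn:dynamical.splitting}) yields $\dim\mathcal{D}_\downarrow^E=\dim\mathcal{D}_\uparrow^E$. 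Third, combine with $\dim\mathcal{D}_\downarrow^E+\dim\mathcal{D}_{\rm Bloch}^E+\dim\mathcal{D}_\uparrow^E=\dim\mathcal{D}=2R\cdot d_\mathcal{V}$ and $\dim\mathcal{D}_{\rm Bloch}^E=0$ to conclude each is $R\cdot d_\mathcal{V}$.

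I expect the only genuine subtlety to be Step 1 — specifically, making sure that the gap condition rules out \emph{all} unit-circle eigenvalues of $C_E$, not just those coming from the two bands $E_j,E_{j+1}$ adjacent to the gap. This is immediate once one observes that every real number of the form $E_\ell(k)$ with $\ell\le j$ satisfies $E_\ell(k)\le E_j(k)\le E_-$ and every one with $\ell\ge j+1$ satisfies $E_\ell(k)\ge E_{j+1}(k)\ge E_+$, since the band functions are labelled in increasing order; so no band function attains a value in the open gap $(E_-,E_+)$. Everything else is bookkeeping with the primary decomposition of $C_E$ and dimension counting, for which the invertibility of $C_E$ (guaranteeing all generalized eigenvalues are nonzero, so the three-way split is exhaustive) has already been arranged by the standing assumption $A_R,B_R\in{\rm GL}(\mathcal{V})$.
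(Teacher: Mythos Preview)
Your proposal is correct and follows essentially the same approach as the paper's proof: use the gap condition together with the $\lambda\leftrightarrow E$ correspondence from Eq.\ \eqref{eqn:char.poly.calculation} to rule out unit-circle eigenvalues of $C_E$, then invoke the duality of Lemma~\ref{lem:basic.duality} to pair eigenvalues inside and outside the disk, and finish by dimension count. Your write-up is more explicit about why \emph{every} band function misses the gap (not just the two adjacent ones) and about why the involution $\lambda_i\mapsto\overline{\lambda_i^{-1}}$ has no fixed points here, but the underlying argument is the same.
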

\begin{proof}
Since $E\in(E_-,E_+)$, for each $e^{ik}\in{\rm U}(1)$, the Bloch Hamiltonian $\mathbf{H}(e^{ik})$ does not have eigenvalue $E$. By (the proof of) Lemma \ref{lem:basic.duality}, $C_E$ does not have eigenvalues on the unit circle. By the same Lemma, we have
\[
\dim\mathcal{D}_{\downarrow}^E=\dim\mathcal{D}_{\uparrow}^E=\frac{1}{2}\dim(\mathcal{D})=R\cdot d_\mathcal{V}.
\] 
\end{proof}

\subsubsection{Isolated edge states in band gap}
The total spectrum $\sigma(H)$ of $H(V,A)$ is obtained as the union of the ranges of the energy band functions $E_j$. It is also called the \emph{bulk spectrum} of $H(V,A)$. If $A_R$ is invertible, $H(V,A)$ has no point spectrum (see Appendix \ref{sec:exponential.validity}; whether or not $|\lambda_i|=1$, there are no bounded modes in $\mathcal{M}^E$). 

In general, the spectrum of the half-space Hamiltonian $\check{H}(V,A)$ and that of $H(V,A)$ are different. For example, it is certainly possible that $\check{H}(V,A)$ acquires point spectrum at certain eigenvalues $E$. These point spectra are due to the edge states of Definition \ref{dfn:edge.states}; note that the latter are square-summable, see Appendix \ref{sec:exponential.validity}.

A rather non-trivial spectral theory result is that the \emph{essential spectrum} of $\check{H}$ and $H$ coincide. For example, one might construct Weyl sequences from the Bloch modes, supported on the right-half Hilbert space. Alternatively, one can invoke Toeplitz operator theory, see Chapter 4.3 of \cite{Arveson}. Thus $\sigma(\check{H})\cap (E_-,E_+)$ can only comprise \emph{discrete spectrum}, i.e., isolated eigenvalues with finite multiplicity.

\begin{rem}
Prop.\ \ref{prop:dim.dec.modes} and Eq.\ \eqref{eqn:Dirichlet.dimension} together show that for each $E\in(E_-,E_+)$, the space of energy-$E$ edge modes is the intersection of two linear manifolds, each of real dimension $2R\cdot d_\mathcal{V}$. One of them, $\mathcal{D}_{\rm Dir}$, is fixed, while the other, $\mathcal{D}_\downarrow^E$, depends on $(V,A)$ and $E$. The intersection takes place inside the real $(4R\cdot d_\mathcal{V})$-dimensional initial data space $\mathcal{D}$. For generic choices of $(V,A;E)$, there will be no non-trivial edge modes.
\end{rem}

\begin{rem}\label{eqn:meaning.deforming.Hamiltonians}
A priori, the ``space of model Hamiltonians'' is topologized by the matrix norms of $V,A$. Indeed, we will define homotopy invariants based upon this parameter space. At the same time, $H(V,A)$ unitarily Fourier transforms into the multiplication operator by the Bloch Hamiltonian function,
\[
{\rm U}(1):e^{ik}\mapsto \mathbf{H}(e^{ik};V,A),
\]
acting on $L^2({\rm U}(1);\mathcal{V})$.
In this Fourier transformed description, the operator norm is given by ${\rm sup}_{e^{ik}\in{\rm U}(1)}||\mathbf{H}(e^{ik};V,A)||$, and this is continuous in the parameters $(V,A)$. The half-space Hamiltonian is obtained as the composition
\[
\check{H}(V,A)=p\circ H(V,A)\circ\iota,
\]
where $\iota:\ell^2(\NN;\mathcal{V})\hookrightarrow\ell^2(\ZZ;\mathcal{V})$ is the inclusion and $p=\iota^*$ is the projection. 

Therefore, the operator norms of $H(V,A)$ and $\check{H}(V,A)$ are continuously controlled by the bulk parameters $(V,A)$. This is relevant for the application of spectral perturbation theory. For example, the in-gap eigenvalues of $\check{H}(V,A)$ will vary continuously with $(V,A)$, see \cite{Kato} \S3.5. The parameters $(V,A)$ are allowed to vary a lot, so edge states are generally not very robust. An exception occurs when $(V,A)$ are constrained by a chiral symmetry, as we shall see.
\end{rem}

\section{Chiral-symmetric Hamiltonians}\label{sec:chiral}

\begin{dfn}
A \emph{chiral symmetry operator} is a grading operator $\Gamma=\Gamma^*=\Gamma^{-1}$ on $\mathcal{V}$. It extends in the obvious way to a grading operator on $\ell^2(\ZZ;\mathcal{V})$. A (self-adjoint) Hamiltonian $H=H(V,A)$ is \emph{chiral-symmetric} if $H\Gamma=-\Gamma H$.
\end{dfn}

A chiral symmetry operator satisfies $\Gamma^2=1$, and it orthogonally decomposes the unit cell Hilbert space into
\[
\mathcal{V}=\mathcal{V}_+\oplus \mathcal{V}_-,
\]
according to its $\pm$ eigenspaces, called the $\pm$-\emph{graded components}. Similarly, $\ell^2(\ZZ;\mathcal{V})\cong\ell^2(\ZZ;\mathcal{V}_+)\oplus \ell^2(\ZZ;\mathcal{V}_-)$ is split into ``sublattice'' Hilbert spaces. In any basis adapted to this splitting, chiral symmetry of $H$ is equivalent to $V$ and $A=\{A_1,\ldots,A_R\}$ being off-diagonal matrices,
\[
\Gamma=\begin{pmatrix} \mathbf{1}_{\mathcal{V}_+} & 0 \\ 0 & -\mathbf{1}_{\mathcal{V}_-} \end{pmatrix},\qquad V=\begin{pmatrix} 0 & v^*\\ v & 0\end{pmatrix},\qquad A_r=\begin{pmatrix}0 & a_{r,-+}\\
a_{r,+-} & 0\end{pmatrix},
\]
where $v,a_{r,+-}:\mathcal{V}_{+}\to\mathcal{V}_{-}$ and $a_{r,-+}:\mathcal{V}_-\to\mathcal{V}_+$. We also have $a_{R,+-}, a_{R,-+}$ invertible iff $A_R$ is invertible. Consequently, the Bloch Hamiltonians are off-diagonal operators,
\begin{align}
\mathbf{H}(\lambda)&=\begin{pmatrix}
0 & v^*+\sum_{r=1}^R\left(\lambda^{-r}a^*_{r,+-}+\lambda^r a_{r,-+}\right)\\
v+\sum_{r=1}^R\left(\lambda^{-r}a^*_{r,-+}+\lambda^r a_{r,+-}\right) & 0
\end{pmatrix}\nonumber\\
&=:\begin{pmatrix} 0 & h_{-+}(\lambda)\\
h_{+-}(\lambda) & 0\end{pmatrix},\label{eqn:chiral.Bloch.Hamiltonian}
\end{align}
and $\Gamma \mathbf{H}(\lambda)\Gamma=-\mathbf{H}(\lambda)$ holds. 

When restricted to $\lambda=e^{ik}\in{\rm U}(1)$, the Hermitian Bloch Hamiltonians $\mathbf{H}(e^{ik})$ have the property that $h_{-+}(e^{ik})=h_{+-}^*(e^{ik})$. Therefore
\begin{align*}
H\; \textrm{is gapped}\;&\Leftrightarrow\;\mathbf{H}(e^{ik})\;\textrm{invertible} &&\forall\,e^{ik}\in{\rm U}(1),\\
&\Leftrightarrow\; \mathbf{H}(e^{ik})^2=\begin{pmatrix} h_{+-}^*(e^{ik})h_{+-}(e^{ik}) & 0 \\ 0 & h_{+-}(e^{ik})h_{+-}^*(e^{ik})\end{pmatrix} >0 &\;\;&\forall\,e^{ik}\in{\rm U}(1),\\
&\Leftrightarrow\; h_{+-}(e^{ik})\;\;\textrm{invertible} && \forall\,e^{ik}\in{\rm U}(1).
\end{align*}
Then the following is well-defined:
\begin{dfn}\label{dfn:bulk.winding}
Let $H=H(V,A)$ be a chiral-symmetric gapped Hamiltonian, with $A_R$ possibly being singular. Its \emph{bulk winding number} is
\[
\mathscr{W}(H):={\rm Wind}\left(\det(h_{+-}|_{{\rm U}(1)})\right)=-{\rm Wind}\left(\det(h_{-+}|_{{\rm U}(1)})\right)\in\ZZ.
\]
\end{dfn}
Here, we recall that the winding number of a differentiable curve $\gamma:{\rm U}(1)\to \CC^*$ about the origin can be defined via the contour integral
\[
{\rm Wind}(\gamma)=\frac{1}{2\pi i}\oint_{\gamma} \frac{dz}{z}.
\]

Note that a continuous change of $(V,A)$, thus of $H(V,A)$ by Remark \ref{eqn:meaning.deforming.Hamiltonians}, induces a homotopy of the map $\det h_{+-}:{\rm U}(1)\to\CC^*$. So Definition \ref{dfn:bulk.winding} is a \emph{topological invariant} of $H(V,A)$. We also mention that traditional solid-state physics constructions such as Berry connection on the Bloch bundle, Berry/geometric phase etc., are not needed to define $\mathcal{W}(H)$.

Next, we examine the half-space Hamiltonian $\check{H}=\check{H}(V,A)$, which is also of the off-diagonal form,
\[
\check{H}=\begin{pmatrix}
0 & \check{H}_{-+}\\
\check{H}_{+-} & 0
\end{pmatrix},\qquad \check{H}_{-+}=\check{H}_{+-}^*:\ell^2(\NN;\mathcal{V}_-)\to\ell^2(\NN;\mathcal{V}_+).
\]
The zero-energy mode space, i.e., $\ker\check{H}$, is special because it admits a splitting according to the $\Gamma$-grading,
\[
\check{\mathcal{M}}_{\rm edge}^0\equiv \ker\check{H}=\ker\check{H}_{+-}\oplus\ker\check{H}_{-+}=:\check{\mathcal{M}}^0_{{\rm edge},+}\oplus\check{\mathcal{M}}^0_{{\rm edge},-}.
\]
\begin{dfn}\label{dfn:edge.index}
Let $H=H(V,A)$ be a chiral-symmetric gapped Hamiltonian, with $A_R$ possibly being singular. Its \emph{edge index} is defined to be
\[
{\rm Ind}_e(\check{H}):=\dim\ker\check{H}_{+-}-\dim\ker\check{H}_{-+}\in\ZZ.
\]
\end{dfn}
The \emph{bulk-edge correspondence} refers to claims of the form
\[
\mathcal{W}(H)\overset{?}{=}{\rm Ind}_e(\check{H}).
\]

\subsection{Two-band nearest-neighbour case}
The base case for the bulk-edge correspondence has $R=1,d_\mathcal{V}=2$, where it is the sharpest.

\begin{thm}\label{thm:2.band.BEC}
Let $H(V,A)$ be a chiral-symmetric gapped Hamiltonian with $R=1, d_\mathcal{V}=2$, and $A=A_{R=1}$ possibly singular. Then $\mathcal{W}(H)\in\{-1,0,+1\}$, and
\[
\begin{cases}
\dim\ker\check{H}_{+-}=1,\;\;\;\,\dim\ker\check{H}_{-+}=0,\quad & \mathcal{W}(H)=1,\\
\dim\ker\check{H}_{+-}=0,\;\;\;\,\dim\ker\check{H}_{-+}=0,\quad & \mathcal{W}(H)=0,\\
\dim\ker\check{H}_{+-}=0,\;\;\;\,\dim\ker\check{H}_{-+}=1,\quad & \mathcal{W}(H)=-1.
\end{cases}
\]
In particular,
\begin{equation*}
{\rm Ind}_e(\check{H})\equiv\dim\ker\check{H}_{+-}-\dim\ker\check{H}_{-+}=\mathcal{W}(H).
\end{equation*}
\end{thm}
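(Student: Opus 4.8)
The plan is to reduce the entire statement to a single scalar second-order recurrence and to read both sides of the claimed equality off the position of its characteristic roots relative to ${\rm U}(1)$. With $R=1$ and $d_{\mathcal V}=2$ the graded pieces $\mathcal V_\pm$ are one-dimensional, so $v$, $a:=a_{1,+-}$ and $b:=a_{1,-+}^*$ are just complex numbers, the off-diagonal Bloch block is the Laurent polynomial $h_{+-}(\lambda)=b\lambda^{-1}+v+a\lambda$, and $h_{-+}(\lambda)=\bar a\lambda^{-1}+\bar v+\bar b\lambda$. Setting $q(\lambda):=\lambda\,h_{+-}(\lambda)=a\lambda^2+v\lambda+b$, the gap criterion of Section~\ref{sec:chiral} says precisely that $h_{+-}$, hence $q$, has no zero on ${\rm U}(1)$, so the argument principle gives $\mathscr W(H)={\rm Wind}(h_{+-}|_{{\rm U}(1)})={\rm Wind}(q|_{{\rm U}(1)})-1=N-1$, where $N\in\{0,1,2\}$ is the number of zeros of $q$ (with multiplicity) inside the open unit disk. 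In particular $\mathscr W(H)\in\{-1,0,+1\}$, with $\mathscr W(H)=1$ iff both roots of $q$ lie inside the disk (so in particular $a\neq 0$) and $\mathscr W(H)=-1$ iff neither does.

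For the edge side, I would use that $\check H$ is off-diagonal, so a zero-energy mode in the $+$ sector is a sequence $(\check\psi^+_n)_{n\ge 1}$, extended by the Dirichlet value $\check\psi^+_0=0$, solving $a\check\psi^+_{n+1}+v\check\psi^+_n+b\check\psi^+_{n-1}=0$ for $n\ge 1$, while the $-$ sector is governed by $\bar b\check\psi^-_{n+1}+\bar v\check\psi^-_n+\bar a\check\psi^-_{n-1}=0$. When $a\neq 0$ this is a genuine order-two recurrence, and the Dirichlet condition cuts its two-dimensional solution space down to the one-parameter family $c(\mu_1^n-\mu_2^n)$, or $c\,n\mu^n$ for a double root $\mu$, where $\mu_1,\mu_2$ are the roots of $q$; such a family is square-summable exactly when \emph{every} root lies inside the disk, i.e.\ when $N=2$, i.e.\ when $\mathscr W(H)=1$, and otherwise $\ker\check H_{+-}=0$. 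By the root duality $\mu\mapsto\overline{\mu^{-1}}$ of Lemma~\ref{lem:basic.duality}, the roots of the $-$ sector recurrence are the $\overline{\mu_i^{-1}}$, so the same reasoning makes $\ker\check H_{-+}$ one-dimensional exactly when all $\mu_i$ lie \emph{outside} the disk, i.e.\ when $N=0$, i.e.\ when $\mathscr W(H)=-1$. When $A_R=A_1$ is invertible this is also consistent with the general count $\dim\check{\mathcal M}^0=R\cdot d_{\mathcal V}=2$ from Eq.~\eqref{eqn:Dirichlet.dimension}, splitting as $1+1$ over the two sectors.

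It then remains to treat the singular cases $a=0$ or $b=0$, which are exactly where $A_R$ fails to be invertible and where the exponential ansatz of Section~\ref{sec:recurrence} can degenerate. Here, after imposing the Dirichlet condition, the governing recurrence drops to first order, its solution is a single geometric sequence that may be the finitely supported one when $v=0$ (the phenomenon illustrated in Example~\ref{ex:dimerized}), and one checks directly and case by case that $\dim\ker\check H_{+-}$ and $\dim\ker\check H_{-+}$ still agree with $N-1=\mathscr W(H)\in\{-1,0,1\}$. Running through the three values of $N$ produces the tabulated dimensions, and hence ${\rm Ind}_e(\check H)=\mathscr W(H)$. The main point requiring care --- rather than any conceptual difficulty --- is the bookkeeping of these non-generic situations: one must confirm that ``square-summable'' still coincides with ``all characteristic roots inside the disk'' even when the recurrence loses order and admits compactly supported solutions, and that this continues to match the argument-principle count of $\mathscr W(H)$ in every degenerate configuration.
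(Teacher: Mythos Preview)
Your approach is correct and is in fact more elementary than the paper's. Both proofs compute $\mathcal{W}(H)$ via the argument principle from the location of roots relative to ${\rm U}(1)$, but you exploit that $\dim\mathcal{V}_\pm=1$ to split the zero-energy problem \emph{immediately} into two decoupled scalar second-order recurrences, one per graded sector, whereas the paper works with the full $4\times 4$ companion matrix $C_0$ on $\mathcal{D}=\mathcal{V}\oplus\mathcal{V}$ and only afterwards separates the graded pieces via the structure of the Bloch eigenvectors $u_i$. Your reduction makes the Dirichlet-constrained solution $c(\mu_1^n-\mu_2^n)$ (or $cn\mu^n$ for a double root) completely explicit, so the non-simple-eigenvalue case and the singular cases $a=0$ or $b=0$ can all be settled by direct inspection of a first- or second-order scalar recurrence. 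The paper, by contrast, handles these non-generic situations by perturbing to a nearby generic $(V',A')$ and then invoking Prop.~\ref{prop:edge.modes.energy.zero} (confinement of in-gap spectrum to $E=0$) together with operator perturbation theory to control eigenvalue flow into and out of zero; your argument bypasses this forward-reference and the ``non-elementary'' machinery entirely. The trade-off is that the paper's companion-matrix framework is set up to generalize to $d_{\mathcal V}>2$ and $R>1$ in Section~\ref{sec:more.bands}, while your scalar reduction is specific to the two-band situation.
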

\begin{proof}
First, we assume that $A$ is invertible, and $C_0(V,A)$ has only simple eigenvalues. The proof in this case is indeed ``elementary'':

Let the (simple) eigenvalues of $C_0$ lying inside the unit circle be $\lambda_i, i=1,2$, with corresponding eigenvectors $v_i=(\lambda_iu_i,\lambda_i^{2}u_i)$, where $u_i\in{\rm ker}(\mathbf{H}(\lambda_i))$. (The exponential ansatz, Eq.\ \eqref{eqn:companion.to.Bloch.eigenvector}, is valid). Because of the form, Eq.\ \eqref{eqn:chiral.Bloch.Hamiltonian}, of $\mathbf{H}(\lambda_i)$, its kernel splits as
\[
\underbrace{\ker(\mathbf{H}(\lambda_i))}_{\dim 1}=\underbrace{\ker h_{+-}(\lambda_i)}_{\subset\,\mathcal{V}_+}\oplus \underbrace{\ker h_{-+}(\lambda_i)}_{\subset\,\mathcal{V}_-},
\]
and $u_i$ lies in exactly one of these two components.
Let us write
\[
I_\pm=\{i:u_i\in\mathcal{V}_\pm\}.
\]
Then
\begin{equation}
u_i\propto \begin{cases}\binom{1}{0}, & i\in I_+,\\
\binom{0}{1}, & i\in I_-,
\end{cases}\label{eqn:2.band.chiral.eigenvectors}
\end{equation}
and the zero-energy decrease mode space splits as
\[
\mathcal{D}^0_{\downarrow}=\mathcal{D}^0_{\downarrow,+}\oplus\mathcal{D}^0_{\downarrow,-},
\]
where
\[
\mathcal{D}^0_{\downarrow,+}={\rm span}\left\{v_i=\begin{pmatrix}
\binom{\lambda_i}{0}\\
\binom{\lambda_i^{2}}{0}
\end{pmatrix}
: i\in I_+\right\},\qquad\mathcal{D}^0_{\downarrow,-}={\rm span}\left\{v_i=\begin{pmatrix}
\binom{0}{\lambda_i}\\
\binom{0}{\lambda_i^{2}}
\end{pmatrix}
: i\in I_-\right\}.
\]
Now we impose the Dirichlet condition on the initial unit cell, to obtain the admissible edge states. Since the $\lambda_i$ are distinct, the Dirichlet condition imposes exactly $R=1$ constraint on $\mathcal{D}^0_{\downarrow,+}$ (and similarly on $\mathcal{D}^0_{\downarrow,-}$), thus
\begin{align}
\dim \ker \check{H}_{+-}&\equiv\dim \check{\mathcal{M}}^0_{{\rm edge},+}\cong \dim \mathcal{D}^0_{\downarrow,+}\cap\mathcal{D}_{\rm Dir}=\max\{0,|I_+|-1\},\label{eqn:even.kernel}\\
\dim \ker \check{H}_{-+}&\equiv\dim \check{\mathcal{M}}^0_{{\rm edge},-}\cong \dim \mathcal{D}^0_{\downarrow,-}\cap\mathcal{D}_{\rm Dir}=\max\{0,|I_-|-1\}.\label{eqn:odd.kernel}
\end{align}

As for the winding number, note that $h_{+-}$ is a Laurent polynomial, of the form
\begin{equation}
h_{+-}=\overline{a_{-+}}\lambda^{-1}+v+a_{+-}\lambda,\qquad v\in\CC,\; a_{+-},a_{-+}\in\CC^*.\label{eqn:2-band.chiral.Laurent}
\end{equation}
Its winding number lies between $-1$ and $+1$. By the argument principle,
\begin{align}
\mathcal{W}(H)={\rm Wind}\, h_{+-}|_{{\rm U}(1)}&=\# \textrm{zeroes of}\; h_{+-}\; \textrm{inside unit circle}\,-\,\#\text{\rm poles of}\; h_{+-}\; \textrm{inside unit circle}\nonumber\\
&=|I_+| - 1,\nonumber\\
-\mathcal{W}(H)={\rm Wind}\, h_{-+}|_{{\rm U}(1)}&=\# \textrm{zeroes of}\; h_{-+}\; \textrm{inside unit circle}\,-\,\#\text{\rm poles of}\; h_{-+}\; \textrm{inside unit circle}\nonumber\\
&=|I_-| - 1.\label{eqn:argument.principle}
\end{align}
Eq.\ \eqref{eqn:even.kernel}--\eqref{eqn:odd.kernel}, together with Eq.\ \eqref{eqn:argument.principle}, give the result.

For the general case where $A$ is singular and/or $C_0(V,A)$ has non-simple eigenvalues, the argument is not so straightforward. We may approximate $(V,A)$ by a neighbouring $(V^\prime,A^\prime)$ to reach the previous case. However, such a perturbation will relate the spectra of $\check{H}(V,A)$ and $\check{H}(V^\prime,A^\prime)$ lying in a \emph{neighbourhood} of $E=0$. Eigenvalues may flow in/out of the zero-energy level as a result of the approximation. Fortunately, Prop.\ \ref{prop:edge.modes.energy.zero}, proved later, guarantees that the in-gap spectrum can only occur at $E=0$.
\end{proof}

For later use, we highlight three examples with singular $A$, which all other chiral-symmetric gapped Hamiltonians will be deformed to, as explained in Section \ref{sec:more.bands}.
See Appendix \ref{sec:non.exponential} for a family of examples whose $C_0(V,A)$ have non-simple eigenvalues.
\begin{example}\label{ex:dimerized} Consider
\[
V=0,\qquad A=\begin{pmatrix} 0 & 0 \\ 1 & 0\end{pmatrix},
\]
so $H(V,A)$ is just the operator which hops between the $\mathcal{V}_+$ and $\mathcal{V}_-$ of \emph{adjacent} unit cells. $\check{H}(V,A)$ is the same, except for a ``cut-off'' at the boundary unit cell,
\begin{center}
 \begin{tikzpicture}
\node(B-1)[gray] at (-1,0) {$\mathcal{V}_-$};
\node(A0) at (0,0) {$\mathcal{V}_+$};
\node(B0) at (1,0) {$\mathcal{V}_-$};
\node(A1) at (2,0) {$\mathcal{V}_+$};
\node(B1) at (3,0) {$\mathcal{V}_-$};
\node(A2) at (4,0) {$\mathcal{V}_+$};
\node at (-0.5,0) {$||$};
\node at (1.5,0) {$|$};
\node at (3.5,0) {$|$};
\node at (5,0) {$\cdots$};
\draw[->,gray] (A0) to [out=105, in=75] (B-1) node at (-0.5,0.9) {$A$};
\draw[->] (A1) to [out=105, in=75] (B0) node at (1.5,0.9) {$A$};
\draw[->] (A2) to [out=105, in=75] (B1) node at (3.5,0.9) {$A$};
 \end{tikzpicture}
\end{center}
Other than the boundary unit cell, each adjacent $\mathcal{V}_-, \mathcal{V}_+$ pair is coupled by the Hermitian matrix $\begin{pmatrix} 0 & 1 \\ 1 & 0 \end{pmatrix}$, which has eigenvalues $\pm 1$. At the boundary unit cell, the left-hopping term $A$ is set to zero, so there is a zero eigenvalue. In total, $\check{H}(V,A)$ has spectrum $\{-1,0,+1\}$, with $\{-1,+1\}$ being infinitely-degenerate (thus essential spectrum\footnote{This is an example of ``flat band spectrum''.}), while $\{0\}$ is discrete spectrum arising from the \emph{compactly-supported} edge state 
\[
\left(\binom{1}{0},\binom{0}{0},\binom{0}{0},\ldots\right).
\]
So ${\rm Ind}_e(\check{H})=1$. The Bloch Hamiltonians are
\[
\check{H}(\lambda)=\begin{pmatrix} 0 & \lambda^{-1} \\ \lambda & 0\end{pmatrix},
\]
and so $\mathcal{W}(H)=+1$. These calculations are consistent with Theorem \ref{thm:2.band.BEC}. 

A winding number $-1$ example is obtained by swapping the roles of $\mathcal{V}_-$ and $\mathcal{V}_+$. 

The basic ``trivial'' winding 0 example is given by
\[
V=\begin{pmatrix} 0 & 1 \\ 1 & 0 \end{pmatrix},\qquad A=0,
\]
and it pairs up adjacent $\mathcal{V}_+, \mathcal{V}_-$ within the same unit cell:
\begin{center}
 \begin{tikzpicture}
\node(B-1)[gray] at (-1,0) {$\mathcal{V}_-$};
\node(A0) at (0,0) {$\mathcal{V}_+$};
\node(B0) at (1,0) {$\mathcal{V}_-$};
\node(A1) at (2,0) {$\mathcal{V}_+$};
\node(B1) at (3,0) {$\mathcal{V}_-$};
\node(A2) at (4,0) {$\mathcal{V}_+$};
\node at (-0.5,0) {$||$};
\node at (1.5,0) {$|$};
\node at (3.5,0) {$|$};
\node at (5,0) {$\cdots$};
\draw[->] (B0) to [out=105, in=75] (A0) node at (0.5,0.9) {$V$};
\draw[->] (B1) to [out=105, in=75] (A1) node at (2.5,0.9) {$V$};
 \end{tikzpicture}
\end{center}
This $\check{H}(V,A)$ clearly has spectrum $\{-1,+1\}$ with no edge states at all.

Notice that the bulk Hamiltonians in these three examples are unitarily related to each other by a change of convention in the unit cell labelling and/or grading $\mathcal{V}_+\leftrightarrow\mathcal{V}_-$. These conventions are implicitly specified via boundary conditions, and the intrinsic meaning of the ``bulk winding number invariant'' is actually quite subtle, see \cite{Thiang,TZ}.

The above examples are ``dimerized limits'' inside the (singular) subclass of Su--Schrieffer--Heeger models \cite{SSH},
\[
V=\begin{pmatrix} 0 & t_1 \\ t_1 & 0 \end{pmatrix},\qquad A=\begin{pmatrix} 0 & 0 \\ t_2 & 0 \end{pmatrix},\qquad t_1, t_2 \in\RR.
\]
Here, the winding number is $+1$ when $|t_2|>|t_1|$, and $0$ when $|t_2|<|t_1|$, with a ``topological phase transition'' at $|t_1|=|t_2|$. For $|t_2|>|t_1|>0$, it is easily checked that the edge state is
\[
\check{\psi}_n=\left(-t_1/t_2\right)^n\binom{1}{0},\qquad n\geq 1.
\]
Note that this edge state does \emph{not} satisfy the Dirichlet condition at $n=0$ (instead, $\check{\psi}_0\in\ker A^*$). It is also \emph{not} a superposition of two normal modes with different decay rates.
\end{example}

\begin{prop}\label{prop:edge.modes.energy.zero}
Let $H(A,V)$ be a chiral-symmetric gapped Hamiltonian with $R=1,d_\mathcal{V}=2$, and $A_{R=1}$ possibly singular. For all non-zero $E$ in the band gap $(E_-,E_+)$, there are no energy-$E$ edge modes.
\end{prop}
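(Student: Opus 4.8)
The plan is to show that a nonzero in-gap energy $E$ cannot support an edge mode by exploiting the chiral symmetry to pair $E$ with $-E$, together with the dimension count from Proposition~\ref{prop:dim.dec.modes}. First I would record the basic structural fact: since $\Gamma \mathbf{H}(\lambda)\Gamma = -\mathbf{H}(\lambda)$ and $\check{H}$ is off-diagonal, an energy-$E$ eigenstate $\check\psi=(\check\psi_+,\check\psi_-)$ of $\check H$ is equivalent to the pair of equations $\check H_{-+}\check\psi_-=E\check\psi_+$ and $\check H_{+-}\check\psi_+=E\check\psi_-$, so that $(\check\psi_+,-\check\psi_-)$ is an energy-$(-E)$ edge mode. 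Hence the edge-mode spaces at $E$ and $-E$ are isomorphic, and for the counting it suffices to treat them together. The real content, though, is a mismatch of dimensions that forbids any intersection at all when $E\neq 0$.

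The key computation is to reconsider the splitting $\ker(\mathbf{H}(\lambda_i)-E)$ used in the proof of Theorem~\ref{thm:2.band.BEC}. When $E=0$, off-diagonality of $\mathbf{H}(\lambda_i)$ forces any kernel vector to lie purely in $\mathcal{V}_+$ or purely in $\mathcal{V}_-$; this is what made the $\pm$-graded bookkeeping work. But for $E\neq 0$, a vector $u=(u_+,u_-)$ with $\mathbf{H}(\lambda_i)u=Eu$ satisfies $h_{-+}(\lambda_i)u_-=Eu_+$ and $h_{+-}(\lambda_i)u_+=Eu_-$, so \emph{both} components are nonzero (neither can vanish, else the other would too since $E\neq0$, contradicting $u\neq0$). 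Thus in the $R=1,d_\mathcal{V}=2$ setting, each of the (generically two) decrease-mode eigenvalues $\lambda_i(E)$ inside the unit circle contributes a normal mode whose initial data $v_i=(\lambda_i u_i,\lambda_i^2 u_i)$ has \emph{both} a $+$ and a $-$ component at the boundary cell. The Dirichlet condition $\check\psi_0=0$ then imposes two independent scalar constraints (one in each graded component) on the two-dimensional span $\mathcal{D}_\downarrow^E$, so $\mathcal{D}_{\rm Dir}\cap\mathcal{D}_\downarrow^E=0$ generically.

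To make this airtight I would argue as follows. By Proposition~\ref{prop:dim.dec.modes}, $\dim\mathcal{D}_\downarrow^E = R\cdot d_\mathcal{V}=2$, and $\dim\mathcal{D}_{\rm Dir}=2$ inside the $4$-dimensional $\mathcal{D}=\mathcal{V}_{0}\oplus\mathcal{V}_{1}$ (using the shifted labelling with Dirichlet cell at $n=0$). A nonzero edge mode requires $\mathcal{D}_\downarrow^E\cap\mathcal{D}_{\rm Dir}\neq 0$, i.e.\ some nonzero decrease-mode initial datum with vanishing $n=0$ cell. Writing out the eigenvectors: in the simple-eigenvalue case, $\mathcal{D}_\downarrow^E=\mathrm{span}\{v_1,v_2\}$ with $v_i=(u_i,\lambda_i u_i)$ after the shift (the $n=0$ cell carrying $u_i$); the Dirichlet condition is $\alpha_1 u_1+\alpha_2 u_2=0$ in $\mathcal{V}\cong\CC^2$. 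Because $u_1,u_2$ each have nonzero $+$ and $-$ components (the $E\neq0$ observation above), they are linearly independent precisely when $u_1\not\propto u_2$; and if $u_1\propto u_2$ one checks directly, via the two Bloch-equation components $h_{\pm\mp}(\lambda_i)$, that $\lambda_1=\lambda_2$, forcing $\det(\mathbf H(\lambda_1)-E)=0$ to be a double root — a non-generic coincidence that I would dispose of either by the self-adjoint duality Lemma~\ref{lem:basic.duality} (which constrains which repeated roots are possible) or by the same perturbation argument flagged in Theorem~\ref{thm:2.band.BEC}. In the generic case $u_1,u_2$ are independent, so $\alpha_1=\alpha_2=0$ and there is no edge mode.

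The main obstacle I anticipate is exactly the non-generic stratum where $C_E$ has a non-simple eigenvalue or $A$ is singular — the same difficulty that appears in Theorem~\ref{thm:2.band.BEC}. For non-simple eigenvalues the exponential ansatz fails and $\mathcal{D}_\downarrow^E$ contains genuine generalized eigenvectors whose boundary-cell data is not simply $u_i$ times a scalar, so the clean ``two constraints on a two-dimensional space'' count must be re-derived; I would handle this by a limiting argument, approximating $(V,A)$ by parameters with simple spectrum and invoking continuity of the in-gap eigenvalues of $\check H(V,A)$ (Remark~\ref{eqn:meaning.deforming.Hamiltonians}) together with the fact that, by the generic case just proved, no in-gap eigenvalue can sit at any fixed $E\neq0$ throughout a neighbourhood — hence none can at $E\neq0$ in the limit either. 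For singular $A$ one additionally notes that $\mathbf H(\lambda)$ is only a genuine Laurent \emph{polynomial} of lower degree, reducing the number of finite eigenvalues of the recurrence but never creating a purely-$\pm$-graded kernel vector at nonzero $E$, so the same conclusion holds; the dimerized-type examples of Example~\ref{ex:dimerized} illustrate that their only in-gap spectrum is at $E=0$.
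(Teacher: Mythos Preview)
Your overall architecture matches the paper's: split into the simple-eigenvalue case for $C_E$, the non-simple case, and then singular $A$ by approximation. The observation that for $E\neq 0$ every $u_i\in\ker(\mathbf{H}(\lambda_i)-E)$ has both graded components nonzero is exactly the right starting point, and the reduction of the Dirichlet condition to $\alpha_1 u_1+\alpha_2 u_2=0$ in $\mathcal{V}\cong\CC^2$ is correct.

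The genuine gap is in your handling of the subcase $u_1\propto u_2$. You assert that ``one checks directly, via the two Bloch-equation components $h_{\pm\mp}(\lambda_i)$, that $\lambda_1=\lambda_2$''. This is not true as stated. What $u_1\propto u_2$ actually gives (since $u$ and $\Gamma u$ then span $\CC^2$) is $\mathbf{H}(\lambda_1)=\mathbf{H}(\lambda_2)$, i.e.\ $h_{+-}(\lambda_1)=h_{+-}(\lambda_2)$ and $h_{-+}(\lambda_1)=h_{-+}(\lambda_2)$. For the Laurent form $h_{+-}(\lambda)=\overline{a_{-+}}\lambda^{-1}+v+a_{+-}\lambda$, equality at two distinct points $\lambda_1\neq\lambda_2$ yields $\overline{a_{-+}}=\lambda_1\lambda_2\,a_{+-}$, and similarly $\overline{a_{+-}}=\lambda_1\lambda_2\,a_{-+}$ from $h_{-+}$. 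These are perfectly consistent with $\lambda_1\neq\lambda_2$; they do \emph{not} force a double root. The contradiction only emerges once you use that $|\lambda_1|,|\lambda_2|<1$: taking moduli in the two relations gives $|a_{-+}|<|a_{+-}|$ and $|a_{+-}|<|a_{-+}|$ simultaneously. This is precisely the paper's argument, and it disposes of $u_1\propto u_2$ outright in the simple case, with no appeal to genericity or perturbation. Your proposed fallback (``dispose of \dots\ by the same perturbation argument'') cannot rescue this, because if the $u_1\propto u_2$ subcase is left open you have not yet established the result even on the generic stratum, so there is nothing to perturb \emph{to}.

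A secondary issue is your treatment of the non-simple-eigenvalue case by perturbing $(V,A)$. For this to work you would need some nearby $(V',A')$ with $C_E(V',A')$ simple for \emph{every} nonzero $E$ in the gap simultaneously (so that $\check H(V',A')$ has no nonzero in-gap spectrum at all, whence the limit $\check H(V,A)$ inherits this). You have not argued that such $(V',A')$ exist; the discriminant of $P_E(\cdot)$ is a nontrivial function of $E$ and may well vanish somewhere in the gap for every nearby parameter. The paper avoids this by a direct computation: it writes the Dirichlet generalized-eigenvector condition $(C_E-\lambda)^2\binom{0}{u}=0$ explicitly, extracts $u\in\ker(\lambda A-\lambda^{-1}A^*)$, and shows this forces $E=0$. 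That direct route is what makes the subsequent perturbation (for singular $A$ only) legitimate.
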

\begin{proof}
We consider invertible $A=A_1$ first. Let $E\in(E_-,E_+)$ with $E\neq 0$. By Prop.\ \ref{prop:dim.dec.modes}, $C_{E}(V,A)$ has two eigenvalues in the unit circle, counted with multiplicity. 
\begin{itemize}
\item 
We first assume that the eigenvalues $\lambda_1,\lambda_2$ of $C_{E}(V,A)$ lying in the unit circle are distinct (thus simple). Choose a corresponding eigen-basis $\{v_1,v_2\}$ for $\mathcal{D}_\downarrow^{E}$. Recall from Eq.\ \eqref{eqn:companion.to.Bloch.eigenvector} that $v_i=(\lambda_iu_i,\lambda_i^{2}u_i)$ with $u_i\in\ker(\mathbf{H}(\lambda_i)-E)\subset\mathcal{V}$. 
Energy-$E$ edge states belong to the intersection
\[
\check{\mathcal{M}}^E_{\rm edge}=\mathcal{D}_\downarrow^{E}\cap\mathcal{D}_{\rm Dir}={\rm span}\left\{\begin{pmatrix}\lambda_1u_1\\\lambda_1^2u_1\end{pmatrix},\begin{pmatrix}\lambda_2u_2\\\lambda_2^2u_2\end{pmatrix}\right\}\cap\left\{v=\binom{0}{*}\in\mathcal{V}\oplus\mathcal{V}\right\}.
\]
Thus $\check{\mathcal{M}}^E_{\rm edge}$ is non-trivial iff $u_1\parallel u_2$. In this case, both $\mathbf{H}(\lambda_1)-E$ and $\mathbf{H}(\lambda_2)-E$ annihilate $u_1$. By conjugating with $\Gamma$, we also see that both $\mathbf{H}(\lambda_1)+E$ and $\mathbf{H}(\lambda_2)+E$ annihilate $\Gamma u_1$. Since $\mathbf{H}(\lambda_1), \mathbf{H}(\lambda_2)$ have the same two (distinct) eigenvalues, $+E$ and $-E$, and the same eigenspaces, they must be equal.

Recalling the form of chiral-symmetric Bloch Hamiltonians in Eq.\ \eqref{eqn:chiral.Bloch.Hamiltonian}, \eqref{eqn:2-band.chiral.Laurent}, equality of the $h_{+-}(\lambda_i)$ part of $\mathbf{H}(\lambda_i)$ means
\[
\lambda_1^{-1}\overline{a_{-+}}+v+\lambda_1 a_{+-}=h_{+-}(\lambda_1)=h_{+-}(\lambda_2)=\lambda_2^{-1}\overline{a_{-+}}+v+\lambda_2 a_{+-},
\]
and a little algebra leads to
\[
\overline{a_{-+}}=\lambda_1\lambda_2 a_{+-}.
\]
Simultaneously, equality of the $h_{-+}$ part means
\[
\lambda_1^{-1}\overline{a_{+-}}+\overline{v}+\lambda_1 a_{-+}=h_{-+}(\lambda_1)=h_{-+}(\lambda_2)=\lambda_2^{-1}\overline{a_{+-}}+\overline{v}+\lambda_2 a_{-+},
\]
so
\[
\overline{a_{+-}}=\lambda_1\lambda_2 a_{-+}.
\]
But $|\lambda_1|,|\lambda_2|<1$, so we have a contradiction.

\item Next, suppose $C_{E}=C_{E}(V,A)$ has a non-simple eigenvalue $\lambda$ with $|\lambda|<1$ (which is possibly defective). Since $\mathcal{D}^E_\downarrow$ is two-dimensional (Prop.\ \ref{prop:dim.dec.modes}), it is exactly the generalized $\lambda$-eigenspace of $C_E$. To get an energy-$E$ edge mode, we need to use Dirichlet initial data, $v\in \mathcal{D}_{\rm Dir}\cap \mathcal{D}^E_\downarrow$, that is, $v=\binom{0}{u}$ for some $u\in\mathcal{V}$. 

The generalized eigenvector condition on $v$ is
\begin{align*}
0&=(C_E-\lambda)^2v\\
&=(C_E-\lambda)\begin{pmatrix}
-\lambda & \mathbf{1}_\mathcal{V}\\
-A^{-1}A^* & -\lambda-A^{-1}V_E
\end{pmatrix}\binom{0}{u}\\
&=(C_E-\lambda)\binom{u}{-(\lambda+A^{-1}V_E) u}.
\end{align*}
Since $\binom{u}{-(\lambda+A^{-1}V_E) u}$ is a genuine $\lambda$-eigenvector of $C_E$, Eq.\ \eqref{eqn:companion.to.Bloch.eigenvector} says that it must equal $\binom{u}{\lambda u}$ where
\[
u\in\ker(\mathbf{H}(\lambda)-E)=\ker(\lambda^{-1}A^*+V_E+\lambda A);
\]
Concurrently, $0=\lambda u+(\lambda+A^{-1}V_E)u=(2\lambda+A^{-1}V_E)u=0$ holds. So $u$ lies in the joint kernel of two operators, thus it also lies in the kernel of their difference,
\begin{align*}
u&\in\ker(\lambda^{-1}A^*+V_E+\lambda A)\cap \ker(2\lambda A+V_E)\\
&\Rightarrow u\in\ker(\lambda A-\lambda^{-1}A^*).
\end{align*}
Now, $\lambda A-\lambda^{-1}A^*\neq 0$, since otherwise,
\[
\lambda A=\lambda^{-1}A^*\;\Rightarrow\;|\lambda|^2|\det(A)|=|\lambda|^{-2}|\det A^*|\;\Rightarrow |\lambda|=1,
\]
contradicting $|\lambda|<1$. Also, $\lambda A-\lambda^{-1}A^*$ is off-diagonal, so its kernel is spanned by either $\binom{1}{0}$ or $\binom{0}{1}$. In the first case, $\binom{1}{0}\parallel u\in\ker(2\lambda A+V_E)$ requires
\[
\binom{0}{0}=(2\lambda A+V_E)\binom{1}{0}=\begin{pmatrix} -E & * \\ * & -E\end{pmatrix}\binom{1}{0}=\binom{-E}{*},
\]
contradicting $E\neq 0$. Similarly for the second case.
\end{itemize}
Finally, suppose $H(V,A)$ has singular $A$. We can approximate $H(V,A)$ arbitrarily well by some $H(V^\prime,A^\prime)$ with invertible $A^\prime$. For the latter, we had already excluded in-gap spectrum away from zero, so the same holds for $H(V,A)$.

\end{proof}

\begin{rem}
Prop.\ \ref{prop:edge.modes.energy.zero} has a generalization to the class of \emph{Dirac-type Hamiltonians}; see \cite{Mong}, which does not, however, address non-simple eigenvalues of $C_E$ or singular $A$.
\end{rem}

\subsection{More bands and longer range models}\label{sec:more.bands}
In the general case, $R\geq 1, d_\mathcal{V}\geq 2$, a chiral-symmetric $H(V,A)$ has Bloch Hamiltonian with off-diagonal term $h_{+-}$ being an ${\rm End}(\mathcal{V}_+,\mathcal{V}_-)$-valued Laurent polynomial with zeroes/poles of highest order $R$. 

Suppose $A_R$ is invertible. Then $\det h_{+-}$ is a Laurent polynomial, whose highest-order zeroes/poles have degree $R\cdot d_\mathcal{V}/2$. The argument principle calculation, Eq.\ \eqref{eqn:argument.principle}, now gives
\[
\mathcal{W}(H)=|I_+|-\frac{R\cdot d_\mathcal{V}}{2}=\frac{R\cdot d_\mathcal{V}}{2}-|I_-|.
\]
However, instead of Eq.\ \eqref{eqn:2.band.chiral.eigenvectors}, $u_i\in\ker(\mathbf{H}(\lambda_i))$ now only gives
\[
i\in I_+\;\Rightarrow\; u_i\propto\binom{w_i}{0},\quad w_i\in\mathcal{V}_+,\qquad\;\;\; i\in I_-\;\Rightarrow\; u_i\propto\binom{0}{w_i},\quad w_i\in\mathcal{V}_-.
\]
Thus
\[
\mathcal{D}^0_{\downarrow,+}={\rm span}\left\{\begin{pmatrix}
\lambda_i\binom{w_i}{0}\\
\vdots\\
\lambda_i^{2R}\binom{w_i}{0}
\end{pmatrix}:i\in I_+\right\},\qquad w_i\in\mathcal{V}_+,
\]
and similarly for $\mathcal{D}^0_{\downarrow,-}$.
The Dirichlet condition on the initial $R$ unit cells ``generically'' imposes the maximal number, $R\cdot d_\mathcal{V}/2$, of constraints, but possibly \emph{fewer}. So instead of Eq.\ \eqref{eqn:even.kernel}--\eqref{eqn:odd.kernel}, we can only say that
\begin{align*}
|I_+|\geq\dim\ker \check{H}_{+-}&\geq\max\{0,|I_+|-R\cdot d_\mathcal{V}/2\}=\max \{0,\mathcal{W}(H)\},\\
|I_-|\geq\dim\ker \check{H}_{-+}&\geq\max\{0,|I_-|-R\cdot d_\mathcal{V}/2\}=\max \{0,-\mathcal{W}(H)\}.
\end{align*}
We learn from this ``elementary'' analysis that
\begin{itemize}
\item $\check{H}(V,A)$ has at least $|\mathcal{W}(H)|$ and at most $R\cdot d_\mathcal{V}$ energy-$0$ edge modes.
\item ``Generically'', it has exactly $|\mathcal{W}(H)|$ ``robust'' energy-$0$ edge states, all lying in one graded component or the other.
\end{itemize}
Generally, the dimension of $\ker \check{H}_{+-}$ and/or $\ker \check{H}_{-+}$ could exceed $|\mathcal{W}(H)|$, and vary wildly with $(V,A)$, so what is the precise formulation of ``generic'' and ``robust''?

\medskip
Remarkably, the number of extra zero-energy states in each graded component is always the same. Thus the weaker statement of Theorem \ref{thm:2.band.BEC} continues to hold even when $R>1$ and/or $d_\mathcal{V}>2$:

\begin{thm}\label{thm:higher.BEC}
Let $H(V,A)$ be a chiral-symmetric gapped Hamiltonian with arbitrary $d_\mathcal{V}$, arbitrary $R$, and $A_R$ not necessarily invertible. Then
\begin{equation}
{\rm Ind}_e(\check{H})\equiv\dim\ker\check{H}_{+-}-\dim\ker\check{H}_{-+}=\mathcal{W}(H).\label{eqn:BEC.many.bands}
\end{equation}
\end{thm}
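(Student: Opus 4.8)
The plan is to reduce the general case to the computation already available in two ways: first by a \emph{homotopy/deformation argument} that trivializes the problem modulo a controlled error, and second by a \emph{stability argument} for the index ${\rm Ind}_e(\check H)$ under such deformations. Concretely, I would begin by showing that ${\rm Ind}_e(\check H)$ is a homotopy invariant of $H(V,A)$ within the class of chiral-symmetric gapped Hamiltonians. Both sides of \eqref{eqn:BEC.many.bands} should be invariant: $\mathcal W(H)$ is manifestly so (a continuous deformation of $(V,A)$ induces a homotopy of $\det(h_{+-}|_{{\rm U}(1)}):{\rm U}(1)\to\CC^*$, as remarked after Definition \ref{dfn:bulk.winding}), while for the edge index one argues that $\check H_{+-}$ is a Fredholm operator (its essential spectrum misses $0$ because $\sigma_{\rm ess}(\check H)=\sigma_{\rm ess}(H)$ omits the band gap), and $\dim\ker\check H_{+-}-\dim\ker\check H_{-+}=\dim\ker\check H_{+-}-\dim\mathrm{coker}\,\check H_{+-}$ is precisely the Fredholm index of $\check H_{+-}$, which is norm-continuous, hence locally constant, in $(V,A)$. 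This is the step where ``non-elementary'' input (Fredholm theory / spectral perturbation theory, à la \cite{Kato}) is unavoidable — the ``elementary'' bounds derived just before the theorem only sandwich the kernel dimensions, they do not pin down the difference.

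With homotopy-invariance of both sides in hand, the second step is to deform an arbitrary $H(V,A)$ to one of the model/``dimerized'' building blocks for which the identity is checked directly. The idea is the standard one from Bott periodicity / Toeplitz index theory alluded to in the paper: the based loop $\lambda\mapsto h_{+-}(\lambda)\in{\rm GL}(\mathcal V_+,\mathcal V_-)$ on ${\rm U}(1)$ can, after stabilizing $\mathcal V_\pm$ if necessary and using that ${\rm GL}$ of a complex vector space is connected, be homotoped through invertible Laurent-polynomial loops to a direct sum of scalar elementary loops $\lambda^{+1}$, $\lambda^{-1}$ and constants — i.e.\ to a direct sum of copies of the three models in Example \ref{ex:dimerized} (winding $+1$, $-1$, and $0$). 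One must take care that the deformation stays within \emph{chiral-symmetric} Hamiltonians (keep $V,A$ off-diagonal throughout) and within \emph{gapped} ones (keep $h_{+-}|_{{\rm U}(1)}$ invertible), and that adding ancillary trivial summands changes neither side. For the elementary model blocks, both $\mathcal W$ and ${\rm Ind}_e$ were computed in Example \ref{ex:dimerized}, and additivity of winding numbers under $\oplus$ matches additivity of $\ker\check H_{\pm\mp}$ dimensions under $\oplus$; summing gives \eqref{eqn:BEC.many.bands}.

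The main obstacle I anticipate is the \emph{simultaneous} control of the three constraints during the normal-form deformation: staying chiral-symmetric, staying gapped, and remaining a genuine finite-range (Laurent-polynomial) Bloch Hamiltonian rather than a general continuous loop. The cleanest route is probably to first invoke a density/approximation result to reduce to invertible $A_R$ with $C_0$ having simple eigenvalues — where Theorem \ref{thm:2.band.BEC}-type reasoning and the ``elementary'' count apply in each graded component — and then to run the $\oplus$-normal-form argument at the level of the invertible loop $h_{+-}|_{{\rm U}(1)}$, lifting the homotopy back to Laurent polynomials by a partial-Fourier-series truncation (à la Fejér) while monitoring invertibility. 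Here Proposition \ref{prop:edge.modes.energy.zero}'s higher analogue — that in-gap edge spectrum concentrates at $E=0$ — would be needed to ensure that the approximation does not leak eigenvalues into $E=0$ from nearby energies; establishing that (or circumventing it via the Fredholm-index continuity above) is the delicate point. A secondary subtlety is verifying that $\check H_{+-}$ is genuinely Fredholm in the singular-$A_R$ case, where compactly supported edge modes appear as in Example \ref{ex:dimerized}; this should again follow from $\sigma_{\rm ess}(\check H)=\sigma_{\rm ess}(H)$ together with the gap hypothesis, but it deserves an explicit remark.
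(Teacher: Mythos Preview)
Your proposal is correct and follows essentially the same strategy as the paper: establish homotopy invariance of both sides of \eqref{eqn:BEC.many.bands}, then deform (after stabilizing by trivial summands) to a direct sum of the dimerized models of Example~\ref{ex:dimerized}, where the identity is checked by hand.

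The paper resolves your ``main obstacle'' differently from what you anticipate, however. First, no Fej\'er-type approximation is needed: the deformation to normal form is carried out \emph{explicitly within Laurent-polynomial loops} throughout, by (i) factoring $h_{+-}(\lambda)=\lambda^{-R}p(\lambda)$ and rotating the $\lambda^{-R}$ factor off into a trivial summand via an explicit $2\times 2$ rotation, (ii) stabilizing further and reducing the polynomial loop $p$ to a \emph{linear} loop by standard moves (as in Hatcher or Wegge-Olsen), then (iii) homotoping the linear loop to a projection loop $\lambda Q+(\mathbf{1}-Q)$. Second, the higher-$d_\mathcal{V}$ analogue of Proposition~\ref{prop:edge.modes.energy.zero} is \emph{false}, and the paper says so explicitly; so you must indeed take your Fredholm-index-continuity route rather than hope to confine in-gap spectrum to $E=0$. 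The paper phrases that continuity equivalently but more concretely: chiral symmetry forces the in-gap discrete spectrum of $\check H$ to be symmetric about $0$, so under the homotopy eigenvalues can only enter or leave $E=0$ in $\pm E$ pairs carrying opposite $\Gamma$-grading, which leaves ${\rm Ind}_e$ unchanged (with Kato's perturbation theory for discrete spectra supplying the continuity of those eigenvalue branches).
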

\begin{proof}
There is an abstract Toeplitz index-theoretic proof of this result (see Section \ref{sec:Toeplitz}). So let us provide the ingredients for an ``elementary'' approach, which is basically a deformation to a direct sum of the three cases in Example \ref{ex:dimerized}. For the latter, $\dim\ker\check{H}_{+-}$ and $\dim\ker\check{H}_{-+}$ are known exactly. Without loss of generality, we assume that $\mathcal{W}(H)\geq 0$.

Recall that $h_{+-}$ is a $\frac{d_\mathcal{V}}{2}\times\frac{d_\mathcal{V}}{2}$ matrix of Laurent polynomials with highest order $\pm R$. Also, $h_{+-}|_{{\rm U}(1)}$ is invertible (gapped assumption), and we call it an invertible \emph{Laurent loop}. Note that an invertible Laurent loop determines the $(V,A)$ of a gapped chiral-symmetric Hamiltonian via the coefficients of its constituent Laurent polynomials; the reverse is true as well. So in what follows, we will find it convenient to talk about deforming the invertible Laurent loop, keeping in mind that this exactly mirrors the physical picture of deforming the gapped chiral-symmetric Hamiltonians.

\begin{itemize}
\item Following Example \ref{ex:dimerized}, define the ``trivial'' Hamiltonian $H_{\rm triv}$ on $\ell^2(\ZZ;\mathcal{V})$ to have $V=\begin{pmatrix} 0 & \mathbf{1}_{d_{\mathcal{V}}/2}\\ \mathbf{1}_{d_{\mathcal{V}}/2} & 0\end{pmatrix}$ and $A_1,\ldots,A_R=0$. Take the direct sum $H(V,A)\oplus H_{\rm triv}$; its Bloch Hamiltonian now has off-diagonal part being $h_{+-}\oplus \mathbf{1}_{d_\mathcal{V}/2}$.

Factorize $h_{+-}(\lambda)=\lambda^{-R}p(\lambda)$, so that $p(\cdot)$ has entries being polynomials of maximum degree $2R$ (no poles). An explicit homotopy
\[
\begin{pmatrix} h_{+-} & 0 \\ 0 & 1_{d_\mathcal{V}/2}\end{pmatrix} \sim\begin{pmatrix}
p & 0 \\ 0 & \lambda^{-R}\cdot \mathbf{1}_{d_\mathcal{V}/2}
\end{pmatrix},
\]
is achieved by
\[
\begin{pmatrix}\cos t & -\sin t \\ \sin t & \cos t\end{pmatrix}
\begin{pmatrix}\lambda^{-R}\cdot\mathbf{1}_{d_\mathcal{V}/2} & 0 \\ 0 & \mathbf{1}_{d_\mathcal{V}/2}\end{pmatrix}
\begin{pmatrix}\cos t & \sin t \\ -\sin t & \cos t\end{pmatrix}
\begin{pmatrix}p & 0 \\ 0 & \mathbf{1}_{d_\mathcal{V}/2}\end{pmatrix},\qquad t\in[0,\pi/2].
\]
Restricted to $\lambda\in {\rm U}(1)$, this gives a homotopy of invertible Laurent loops. The total winding number, $\mathcal{W}(H)$, is maintained throughout, so the \emph{polynomial loop} $p$ alone has winding number $\mathcal{W}(H)+R\cdot {d_\mathcal{V}/2}$.

The loop $\lambda\mapsto \lambda^{-R}\cdot \mathbf{1}_{d_{\mathcal{V}}/2}$ in the second direct summand is easy to ``split up'' into a direct sum of $R\cdot d_{\mathcal{V}}/2$ copies of $\lambda\mapsto\lambda^{-1}$ by a similar procedure.

\item The next step is to add another auxiliary summand $\mathcal{V}^\prime=\mathcal{V}^\prime_+\oplus\mathcal{V}^\prime_-$ of sufficiently large dimension, and introduce another $H_{\rm triv}$ on $\ell^2(\ZZ;\mathcal{V}^\prime)$. The extra space provided by $\mathcal{V}^\prime_+$ is to allow $p\oplus\mathbf{1}_{\mathcal{V}^\prime_+}$ to be homotoped (through invertible polynomial loops) into a matrix $\ell$ of \emph{linear} loops, i.e.\ of degree at most 1. There are standard linear algebraic ways to achieve this, e.g., Prop.\ 2.6 of \cite{Hatcher}, Lemma 9.2.6 of \cite{WO}.

\item The invertible linear loop $\ell$ can be further homotoped (within invertible linear loops) into a direct sum of $(\mathcal{W}(H)+R\cdot {d_\mathcal{V}/2})$ copies of the loop $\lambda\mapsto \lambda$, supplemented by $1$s along the diagonal. This final loop is called a \emph{projection loop}, since it is of the form $\lambda Q +(\mathbf{1}-Q)$ for some projection matrix $Q$. Such a homotopy is described in, e.g.\ Lemma 9.2.7 of \cite{WO}, which we reproduce here. We have $\ell(\lambda)=\lambda C +D$ for some matrices $C,D$, with $C+D=\ell(1)$ invertible. So we can invertibly homotope $\ell$ to
\[
\ell^\prime:=(C+D)^{-1}\cdot\ell:\lambda\mapsto \mathbf{1}+C(\lambda-\mathbf{1}).
\]
For every $\lambda\in{\rm U}(1)\setminus\{1\}$, 
\[
\ell^\prime(\lambda)=(1-\lambda)(\frac{1}{1-\lambda}\mathbf{1}-C)\qquad \textrm{is invertible}.
\]
The map $\lambda\mapsto\frac{1}{1-\lambda}$ takes ${\rm U}(1)\setminus\{1\}$ to the line ${\rm Re}(\mu)=\frac{1}{2}$. So the previous line becomes
\begin{align*}
&(1-\lambda)^{-1}\not\in\sigma(C),\qquad &\forall \lambda\in{\rm U}(1)\setminus\{1\},\\
\Longleftrightarrow\; & \mu\not\in\sigma(C),\qquad &\forall \mu\in\CC:{\rm Re}(\mu)=\frac{1}{2}.
\end{align*}
Because the spectrum of $C$ avoids the ${\rm Re}(\mu)=\frac{1}{2}$ line, we can linearly homotope $C$ to a projection $Q$, without the spectrum ever hitting this critical line. Then $\ell^\prime$ is correspondingly invertibly homotoped to $\mathbf{1}+Q(\lambda-1)=\lambda Q+(\mathbf{1}-Q)$.
\end{itemize}

After adding $H_{\rm triv}$ and performing these homotopies, the final Hamiltonian $H^\prime$ has $h_{+-}^\prime$ being a simple diagonal expression,
\[
h_{+-}^\prime(\lambda)={\rm diag}(\underbrace{\lambda,\ldots,\lambda}_{\mathcal{W}(H)+R\cdot d_\mathcal{V}/2}, \underbrace{\lambda^{-1},\ldots,\lambda^{-1}}_{R\cdot d_\mathcal{V}/2},1,\ldots,1).
\]
Thus $H^\prime$ is just a direct sum of the basic ``dimerized'' examples in Example \ref{ex:dimerized}. Specifically, there are $\mathcal{W}(H)+R\cdot {d_\mathcal{V}/2}$ copies of the $+1$ winding model, $R\cdot {d_\mathcal{V}/2}$ copies of the winding number $-1$ model, and the rest are copies of the winding $0$ model. It follows that
\begin{align*}
\dim \ker \check{H}^\prime_{+-}&=\mathcal{W}(H)+R\cdot {d_\mathcal{V}/2},\\
\dim \ker \check{H}^\prime_{-+}&=R\cdot {d_\mathcal{V}/2},
\end{align*}
and therefore ${\rm Ind}_e(\check{H}^\prime)=\mathcal{W}(H)$.

The spectrum of chiral-symmetric self-adjoint operators is always symmetric about 0. So reversing the above homotopies will at worst result in extra pairs of $\pm E$ eigenvalues being continuously introduced to the zero energy level, which never changes ${\rm Ind}_e(\cdot)$. (Unlike the $d_\mathcal{V}=2$ case, there is no confinement of the discrete spectrum to zero energy.) Here, the stability of discrete spectra is used, \cite{Kato} Chapter IV \S3.5. Also, the extra $\check{H}_{\rm triv}$ do not contribute any edge states at all. We conclude that
\[
{\rm Ind}_e(\check{H})={\rm Ind}_e(\check{H}^\prime)=\mathcal{W}(H). 
\]
\end{proof}

\begin{rem}The reader familiar with $K$-theory and index theory may recognize the above stabilization-homotopy argument as one of the ``hands-on'' steps in the proof of Bott periodicity introduced in \cite{AB}, see \cite{Hatcher,WO} for expositions. To get the actual Bott periodicity result, which is a statement about general loops, spheres, etc.\ in ${\rm GL}(d)$ for large $d$ (not just Laurent loops), some extra work with Fourier analysis is required. 
\end{rem}

Returning to the $d_\mathcal{V}=2$ case, we can improve Theorem \ref{thm:2.band.BEC}.
\begin{thm}\label{thm:2.band.long.range.BEC}
Let $H(V,A)$ be a chiral-symmetric gapped Hamiltonian with $d_\mathcal{V}=2$, and arbitrary $R$. Then $\mathcal{W}(H)\in\{-R,\ldots,R\}$, and
\[
\begin{cases}
\dim\ker\check{H}_{+-}=\mathcal{W}(H),\;\;\;\,\dim\ker\check{H}_{-+}=0,\quad & \mathcal{W}(H)\geq 0,\\
\dim\ker\check{H}_{+-}=0,\qquad\;\;\;\;\,\dim\ker\check{H}_{-+}=-\mathcal{W}(H),\quad & \mathcal{W}(H)\leq 0.
\end{cases}
\]
\end{thm}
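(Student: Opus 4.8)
The plan is to bootstrap off Theorem~\ref{thm:higher.BEC}, which already gives $\dim\ker\check{H}_{+-}-\dim\ker\check{H}_{-+}=\mathcal{W}(H)$, by proving the one extra fact that is still missing: when $\mathcal{W}(H)\ge 0$ the ``minority'' graded sector is trivial, $\ker\check{H}_{-+}=0$. The index identity then forces $\dim\ker\check{H}_{+-}=\mathcal{W}(H)$, and the case $\mathcal{W}(H)\le 0$ follows by interchanging the grading $\mathcal{V}_+$ and $\mathcal{V}_-$ (i.e.\ replacing $\Gamma$ by $-\Gamma$), which swaps $h_{+-}\leftrightarrow h_{-+}$ and $\check{H}_{+-}\leftrightarrow\check{H}_{-+}$ and negates $\mathcal{W}(H)$. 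The range $\mathcal{W}(H)\in\{-R,\dots,R\}$ costs nothing: since $d_\mathcal{V}=2$ the map $h_{+-}$ is a \emph{scalar} Laurent polynomial with nonzero terms only in degrees $-R,\dots,R$ and no zero on ${\rm U}(1)$, so $\lambda^{R}h_{+-}(\lambda)$ is a polynomial of degree $\le 2R$ without zeros on ${\rm U}(1)$ and $\mathcal{W}(H)$ equals its number of zeros in the open unit disk $\mathbb{D}$ minus $R$.

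For the key fact, assume $\mathcal{W}(H)\ge 0$ and let $\phi=(\phi_n)_{n\ge1}\in\ell^2(\NN)$ (using $\dim\mathcal{V}_-=1$) solve $\check{H}_{-+}\phi=0$; the governing half-space recurrence has scalar characteristic Laurent polynomial $h_{-+}$, which I factor as $h_{-+}(\lambda)=\lambda^{-s}q_0(\lambda)$ with $q_0$ a polynomial, $q_0(0)\ne 0$, and I let $t\le R$ be the top Laurent degree of $h_{-+}$, so $\deg q_0=s+t$. If $t\le 0$ then $h_{-+}$ has no positive-degree term and the argument collapses trivially, so assume $t\ge 1$. Extend $\phi$ by $0$ to $n\le 0$, let $g_n$ be the bulk $h_{-+}$-recurrence applied to this extension; because $\phi$ solves the half-space equation for $n\ge1$ and is zero for $n\le0$, the defect $g$ is supported in the finite window $\{1-t,\dots,0\}$. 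Passing to $z$-transforms, $\hat\phi(z)=\sum_{n\ge1}\phi_nz^n\in H^2(\mathbb{D})$ and $\hat g(z)=h_{-+}(z^{-1})\hat\phi(z)$, so after multiplying by $z^{t}$
\[
z^{t}\hat g(z)=\hat q(z)\,\hat\phi(z),\qquad \hat q(z):=z^{t}h_{-+}(z^{-1})=z^{s+t}q_0(z^{-1}),
\]
where $z^{t}\hat g$ is a polynomial of degree $\le t$ divisible by $z$, and the zeros of $\hat q$ are the reciprocals of those of $q_0$.

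Next I count zeros. The argument principle gives ${\rm Wind}(q_0|_{{\rm U}(1)})=s+{\rm Wind}(h_{-+}|_{{\rm U}(1)})=s-\mathcal{W}(H)$, so $q_0$ has $s-\mathcal{W}(H)$ zeros in $\mathbb{D}$, none on ${\rm U}(1)$ (the gap hypothesis: $q_0|_{{\rm U}(1)}$ differs from $h_{-+}|_{{\rm U}(1)}$ by a unit), hence $t+\mathcal{W}(H)$ zeros outside $\overline{\mathbb{D}}$; thus $\hat q$ has exactly $t+\mathcal{W}(H)$ zeros, all nonzero and inside $\mathbb{D}$. Since $\hat\phi$ is holomorphic on $\mathbb{D}$ and $\hat q$ is zero-free on ${\rm U}(1)$, the displayed identity forces $z^{t}\hat g$ to vanish at each of these $t+\mathcal{W}(H)$ interior zeros of $\hat q$ to full multiplicity, on top of its zero at $0$ --- at least $1+t+\mathcal{W}(H)$ zeros of a polynomial of degree $\le t$. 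When $\mathcal{W}(H)\ge 0$ this is absurd unless $\hat g\equiv 0$, whence $\hat q\hat\phi\equiv 0$ and, since $\hat q\not\equiv 0$, $\hat\phi\equiv 0$, i.e.\ $\phi=0$. So $\ker\check{H}_{-+}=0$, and with Theorem~\ref{thm:higher.BEC} the theorem follows.

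The hardest part, conceptually, is exactly the issue raised after Theorem~\ref{thm:higher.BEC}: the naive dimension count only bounds how many constraints the Dirichlet cut-off imposes on the decrease-mode space, so one must genuinely show these are of full rank on the minority graded sector. What makes this work in the $d_\mathcal{V}=2$ case is the Hardy-space/argument-principle input above, which is the classical computation that a scalar Toeplitz operator with invertible Laurent-polynomial symbol of winding number $w$ has trivial kernel for $w\ge 0$. I also expect the degenerate ranges to need care: tracking the degree data $(s,t)$ rather than always writing $R$ is what lets the singular-$A_R$ case --- where $h_{\pm\mp}$ may lack a pole or zero of top order $R$, and compactly supported edge states can occur as in Example~\ref{ex:dimerized} --- be absorbed into the same computation with no separate perturbation argument.
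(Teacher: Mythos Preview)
Your proof is correct, and it takes a genuinely different route from the paper's. The paper's proof is two lines: it applies Theorem~\ref{thm:higher.BEC} for the index identity, then cites Coburn's theorem \cite{Coburn} (for any Toeplitz operator with continuous scalar symbol, the kernel or the cokernel vanishes) as a black box to force one of $\ker\check{H}_{+-}$, $\ker\check{H}_{-+}$ to be trivial. You instead reprove the relevant special case of Coburn's result by hand for Laurent-polynomial symbols, via the $z$-transform identity $z^t\hat g=\hat q\,\hat\phi$ and an argument-principle zero count. The computation is clean: the key inequality $1+(t+\mathcal{W}(H))>t$ when $\mathcal{W}(H)\ge 0$ forces the degree-$\le t$ polynomial $z^t\hat g$ to vanish identically, and the degenerate cases ($t\le 0$, or $A_R$ singular) are absorbed by tracking the actual Laurent support $(-s,t)$ rather than the nominal range $R$.

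What each approach buys: the paper's citation is shorter and immediately covers continuous (not just Laurent) symbols, but it leans on an external operator-theory fact at precisely the point where the paper's declared aim is to be ``elementary''. Your argument is more self-contained and aligned with that aim; it also handles the singular-$A_R$ case directly, with no separate perturbation step, which is a small bonus over the treatment of analogous degeneracies elsewhere in the paper.
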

\begin{proof}
The off-diagonal part $\check{H}_{+-}$ is a Toeplitz operator (see Section \ref{sec:Toeplitz}) with \emph{scalar}-valued symbol function $h_{+-}$ having winding number $\mathcal{W}(H)$. By Theorem \ref{thm:higher.BEC},
\[
\dim\ker\check{H}_{+-}-\dim\ker\check{H}_{-+}=\mathcal{W}(H).
\]
Then the claim follows from the following Toeplitz operator theory result of \cite{Coburn},
\[
\dim\ker\check{H}_{+-}=0\quad{\rm and/or}\quad\dim\ker\check{H}_{-+}=0.
\]
\end{proof}
Thus, for two-band models, $d_\mathcal{V}=2$, there are always exactly $|\mathcal{W}(H)|\leq R$ edge modes at $0$-energy, regardless of the hopping range $R$. However, when $R\geq 2$, the possibility of edge states with non-zero in-gap energy is not excluded (unlike Prop.\ \ref{prop:edge.modes.energy.zero}).

\subsection{Toeplitz index theory discussion}\label{sec:Toeplitz}
The off-diagonal operator $\check{H}_{+-}:\ell^2(\NN;\mathcal{V}_+)\to\ell^2(\NN;\mathcal{V}_-)$ is an example of a \emph{Toeplitz operator}. The ${\rm End}(\mathcal{V}_+,\mathcal{V}_-)$-valued \emph{symbol} of this Toeplitz operator is precisely the off-diagonal part of the Bloch Hamiltonian function, $h_{+-}|_{{\rm U}(1)}$.
The Toeplitz index theorem (\cite{Noether}, \cite{Arveson} Theorem 4.4.3) reads
\begin{equation}
\underbrace{\textrm{Fredholm index}(\check{H}_{+-})}_{{\rm Ind}_e(\check{H})}\equiv \dim\ker\check{H}_{+-}-\dim\ker\underbrace{\check{H}_{+-}^*}_{=\check{H}_{-+}}=\underbrace{{\rm Wind}(\det h_{+-}|_{{\rm U}(1)})}_{\equiv\mathcal{W}(H)},\label{eqn:Toeplitz.index.theorem}
\end{equation}
which is precisely Eq.\ \eqref{eqn:BEC.many.bands}. Here, there is implicit use of the Fredholm property, that $0$ is avoided in the essential spectrum of $\check{H}_{+-}$ (see \cite{Arveson} Chapter 4.3), in order that the left-hand-side is well-defined.

Eq.\ \eqref{eqn:Toeplitz.index.theorem} holds for Toeplitz operators whose invertible symbol functions only need to be continuous, not necessarily analytic/smooth. The winding number then refers to the homotopy/covering space definition, generalizing the complex analytic one. Hamiltonians which are not strictly finite-range, but are norm-approximated by finite-range ones, have continuous $h_{+-}$. In this sense, the Toeplitz index theorem subsumes Theorem \ref{thm:higher.BEC}, as is known in the mathematical physics literature, e.g.\ Chapter 1.2 of \cite{PSB}, Sec.\ 3B of \cite{TZ}. 

Another technical advantage of the Toeplitz index method is that there is no need to stabilize in the sense of adding an extra $\mathcal{V}^\prime$ to $\mathcal{V}$. This is because $\pi_1({\rm GL}(d))\cong\ZZ$ for all dimensions $d$, via the determinant map. (Note: this does not require the strength of the Bott periodicity of $\pi_n({\rm GL}(\infty))$.) So given any $H(V,A)$ with maximal range $R$ and fixed $d_\mathcal{V}$, its $h_{+-}|_{{\rm U}(1)}$ would be homotopic within continuous ${\rm GL}(d_\mathcal{V})$-valued symbols, to the Hamiltonian with $h_{+-}^\prime(\lambda)={\rm diag}(\lambda^{\mathcal{W}(H)},1,\ldots,1)$. The latter Hamiltonian is basically a dimerized 2-band model, whose half-space version is easily seen to have exactly $|\mathcal{W}(H)|$ edge modes. The catch is that we generally have to ``borrow'' longer range terms, $A_r, r>R$, to achieve this homotopy, since general continuous loops cannot be expressed as Laurent ones with finite expansion. This should be compared to the proof of Theorem \ref{thm:higher.BEC}, where $R$ is never exceeded, but the extra $\mathcal{V}^\prime$ may need to be very large.

\section*{Acknowledgments}
The author thanks H.\ Zhang for stimulating discussions on discrete interface models, which led to a rethinking of the 1D BEC.

\section*{Author declarations}
The author has no conflicts to disclose.

\section*{Data availability statement}
Data sharing is not applicable to this article as no new data were created or analyzed in this study. 

\appendix
\section{Appendix}
\subsection{Validity of exponential ansatz for normal modes}\label{sec:exponential.validity}
Generally, the companion matrix $C_E$ to $H(V,A)$ may not be diagonalizable, and we must consider its generalized eigenvectors as $\mathcal{D}$-valued initial data for the energy-$E$ modes,
\[
v_i\not\in\ker(C_E-\lambda_i),\qquad v_i\in\ker(C_E-\lambda_i)^{m_i},\qquad m_i\geq 2.
\]
In the more familiar case of \emph{scalar-valued} (i.e.\ $\mathcal{V}=\CC$) recurrence equations, one derives, for multiplicity $m=2$ say, that a generalized eigenvector $v_i$ has the form
\[
v_i=(\lambda_i,2\lambda_i^2,3\lambda_i^3,\ldots,2R\cdot\lambda_i^{2R}),
\]
up to adding some genuine eigenvector. This $v_i$ generates the familiar ``polynomial-exponential ansatz'' for the normal modes,
\begin{equation}
\psi_n=n\lambda_i^n,\qquad n\in\ZZ.\label{eqn:poly.exp.ansatz}
\end{equation}
Notice that this $\psi$ is able to satisfy $\psi_0=0$, whereas a normal mode generated from a genuine eigenvector, Eq.\ \eqref{eqn:basic.ansatz}, is always nowhere-vanishing.

For our $\mathcal{V}$-valued case, a generalized eigenvector $v_i\in\mathcal{D}_i^E$ need not have the nice form
\[
v_i=(0,\lambda_i u_i,2\lambda_i^2 u_i,\ldots, 2R\cdot\lambda_i^{2R}u_i),\qquad u_i\in \ker(\mathbf{H}(\lambda_i)-E).\qquad \textrm{(fails)}
\]
This means that the ``polynomial-exponential ansatz'' analogous to Eq.\ \eqref{eqn:poly.exp.ansatz} may not work,
\begin{equation}
\psi_n=n\lambda_i^nu_i,\qquad u_i\in \ker(\mathbf{H}(\lambda_i)-E).\qquad \textrm{(fails)}\label{eqn:poly.exp.fail}
\end{equation}
Nevertheless, we can still compute
\begin{align*}
C_E^{2R}v_i&=(C_E-\lambda_i+\lambda_i)^{2R}v_i\\
&=\underbrace{(C_E-\lambda_i)^{2R}v_i}_{0}+\binom{2R}{1}\lambda_i(C_E-\lambda_i)^{2R-1}v_i+\ldots+\lambda_i^{2R}v_i\\
&=0+\ldots+0+\binom{2R}{m_i-1}\lambda_i^{2R-m_i+1}\underbrace{(C_E-\lambda_i)^{m_i-1}v_i}_{\textrm{eigenvector}}+\ldots+\lambda_i^{2R}v_i.
\end{align*}
This says that after $2R$ advancements by $C_E$, the initial data $v_i\in\mathcal{D}_i^E$ is turned into the above combination of (linearly independent) vectors in the Jordan chain generated by $v_i$. It follows that the normal mode generated by $v_i$ will still decay/blow-up as $n\to\infty$, according to whether $|\lambda_i|<1$ or $|\lambda_i|>1$. The decay/blow-up rate is $\lambda_i$ per-unit-cell, up to polynomial correction factors.

\subsection{Edge states which are not built from exponentials}\label{sec:non.exponential}
For $d_\mathcal{V}=2, R=1$, consider the 1-parameter family of chiral-symmetric models,
\[
V=\begin{pmatrix}0 & 1 \\ 1 & 0\end{pmatrix},\qquad A^{(\theta)}=A^{(\theta)}_1=e^{i\theta}\begin{pmatrix}0 & \frac{1}{4} \\ 1 & 0\end{pmatrix},\qquad \theta\in [-\pi,\pi].
\]
The Bloch Hamiltonians are
\[
\mathbf{H}(\lambda;V,A^{(\theta)})=\begin{pmatrix}
0 & e^{-i\theta}\lambda^{-1}+1+\frac{1}{4}e^{i\theta}\lambda \\
\frac{1}{4}e^{-i\theta}\lambda^{-1}+1+e^{i\theta}\lambda  & 0
\end{pmatrix}.
\]
To calculate the winding number of the off-diagonal element $h_{+-}|_{{\rm U}(1)}$, reparametrize by $\tilde{\lambda}=e^{i\theta}\lambda$, so that $h_{+-}:\tilde{\lambda}\mapsto \frac{1}{4}\tilde{\lambda}^{-1}+1+\tilde{\lambda}$. As $\tilde{\lambda}$ is varied in ${\rm U}(1)$, $h_{+-}$ traces out an ellipse with winding number $+1$ (around the origin). So, by Theorem \ref{thm:2.band.BEC}, there should be one zero-energy edge state for $\check{H}(V,A^{(\theta)})$. Let us calculate this edge state.

The characteristic polynomial for $C_0$ is
\begin{align*}
\det(\lambda-C_0)=\lambda^2\det \mathbf{H}(\lambda)&=(\frac{1}{4}e^{-i\theta}+\lambda+e^{i\theta}\lambda^2)(e^{-i\theta}+\lambda+\frac{1}{4}e^{i\theta}\lambda^2)\\
&=\frac{1}{4}\cdot e^{i2\theta}(\lambda+\frac{1}{2}e^{-i\theta})^2(\lambda+2e^{-i\theta})^2,
\end{align*}
which has the repeated root $-\frac{1}{2}e^{-i\theta}$ inside the unit circle. It is easy to check that $(0,0,1,0)$ is a generalized eigenvector of $C_0(V,A^{(\theta)})$,
\begin{align*}
\left(C_0+\frac{1}{2}e^{-i\theta}\right)\begin{pmatrix} 0 \\ 0 \\ 1 \\ 0\end{pmatrix}
&=\begin{pmatrix}
\frac{1}{2}e^{-i\theta} & 0 & 1 & 0 \\ 
0 & \frac{1}{2}e^{-i\theta} & 0 & 1 \\
-\frac{1}{4} e^{-i2\theta} & 0 & -\frac{1}{2}e^{-i\theta} & 0 \\
0 & -4e^{-i2\theta} & 0 & \frac{9}{2}e^{-i\theta}
\end{pmatrix}
\begin{pmatrix} 0 \\ 0 \\ 1 \\ 0\end{pmatrix}=\begin{pmatrix} 1 \\ 0 \\ -\frac{1}{2}e^{-i\theta} \\ 0 \end{pmatrix},\\
\left(C_0+\frac{1}{2}e^{-i\theta}\right)^2\begin{pmatrix} 0 \\ 0 \\ 1 \\ 0\end{pmatrix}&=0.
\end{align*}
This generalized eigenvector is Dirichlet in the zeroth-unit cell, and it provides the initial data $(\check{\psi}_0, \check{\psi}_1)=(0,0,1,0)$ for the following state,
\begin{equation}
\check{\psi}=(\check{\psi}_1,\check{\psi}_2,\ldots)=\left(\binom{1}{0},\binom{-e^{-i\theta}}{0},\binom{\frac{3}{4}e^{-i2\theta}}{0},\binom{-\frac{1}{2}e^{-i3\theta}}{0},\ldots\right).\label{eqn:poly.exp.example}
\end{equation}
Note that $\check{\psi}$ is \emph{not} a linear combination of exponential ans\"{a}tze states, Eq.\ \eqref{eqn:basic.ansatz}. Yet, it is readily verified that $\check{\psi}$ is precisely the zero-energy edge state of $\check{H}(V,A^{(\theta)})$,
\begin{align*}
(A^{(\theta)})^*\check{\psi}_1+V\check{\psi}_2+A^{(\theta)}\check{\psi}_3&=0,\\
V\check{\psi}_1+A^{(\theta)}\check{\psi}_2&=0.
\end{align*}

We may transform this family of examples by ${\rm U}(1)\times {\rm U}(1)$ unitaries, and also scale them. So there is at least a 4-parameter space of gapped chiral-symmetric Hamiltonians for which the ansatz, Eq.\ \eqref{eqn:basic.ansatz} will fail to give the edge state. 

The point is that the submanifold of those $(V,A)$ for which $C_E(V,A)$ has non-simple eigenvalues, is generally not just a set of points, and is not automatically negligible. In the current case, the full parameter space is sufficiently large:
\[
V=\begin{pmatrix} 0 & \overline{v}\\ v & 0\end{pmatrix},\qquad A=\begin{pmatrix} 0 & a_{-+} \\ a_{+-} & 0 \end{pmatrix},\qquad v,a_{+-},a_{-+}\in\CC,
\]
subject to the gapped constraint. Then the submanifold of ``bad'' $(V,A)$ does not introduce disconnections.

A final remark is that the edge state in these examples, Eq.\ \eqref{eqn:poly.exp.example}, happen to have polynomial-exponential form. It can be viewed as a certain limiting case, $\lambda_1,\lambda_2\to-\frac{1}{2}e^{-i\theta}$, as $E\to 0$. As mentioned in Eq.\ \eqref{eqn:poly.exp.fail}, the story is more complicated in general models, with more $\lambda_i$ and possibly higher multiplicities, and we cannot guarantee the polynomial-exponential ansatz.

\end{document}